\newcommand{\mb}[1]{{\boldsymbol{#1}}}
\newcommand{\mbb}[1]{\mathbb{#1}}
\newcommand{\mc}[1]{\mathcal{#1}}
\newcommand{\mr}[1]{\mathrm{#1}}
\DeclareMathOperator*{\argmin}{argmin}
\theoremstyle{plain}
\newtheorem{theorem}{\protect\theoremname}
\theoremstyle{plain}
\newtheorem{corollary}[theorem]{\protect\corollaryname}
\newtheorem{remark}{\em Remark}
\providecommand{\corollaryname}{Corollary}
\providecommand{\theoremname}{Theorem}
\begin{document}
\title{Compressive Deconvolution\\ in Random Mask Imaging}

\author{Sohail~Bahmani and Justin~Romberg,~\IEEEmembership{Senior Member}%
\thanks{The authors are with the School of Electrical and Computer Engineering, Georgia Institute of Technology in Atlanta, GA.  E-mail: \{sohail.bahmani,jrom\}@ece.gatech.edu.  This work was supported by ONR grant N00014-11-1-0459, and NSF grants CCF-1415498 and CCF-1422540.}
}

\maketitle

\begin{abstract}   
We investigate the problem of reconstructing signals from a subsampled convolution of their modulated versions and a known filter. The problem is studied as applies to a specific imaging architecture that relies on spatial phase modulation by randomly coded ``masks''. The diversity induced by the random masks is deemed to improve the conditioning of the deconvolution problem while maintaining sampling efficiency.

We analyze a linear model of the imaging system, where the joint effect of the spatial modulation, blurring, and spatial subsampling is represented concisely by a measurement matrix. We provide a bound on the conditioning of this measurement matrix in terms of the number of masks $K$, the dimension (i.e., the pixel count) of the scene image $L$, and certain characteristics of the blurring kernel and subsampling operator.  The derived bound shows that stable deconvolution is possible with high probability even if the number of masks (i.e., $K$ ) is as small as $\frac{L\log L}{N}$, meaning that the total number of (scalar) measurements is within a logarithmic factor of the image size. Furthermore, beyond a critical number of masks determined by the extent of blurring and subsampling, use of every additional mask improves the conditioning of the measurement matrix.

 We also consider a more interesting scenario where the target image is known to be sparse.  We show that under mild conditions on the blurring kernel, with high probability the measurement matrix is a restricted isometry when the number of masks is within a logarithmic factor of the sparsity of the scene image. Therefore, the scene image can be reconstructed using any of the well-known sparse recovery algorithms such as the basis pursuit. The bound on the required number of masks grows linearly in sparsity of the scene image but logarithmically in its ambient dimension. The bound provides a quantitative view of the effect of the blurring and subsampling on the required number of masks, which is critical for designing efficient imaging systems.
\end{abstract}

\section{Introduction}\label{sec:Intro}

In this paper, we investigate the mathematics of reconstructing a high-resolution image from measurements made by a low-resolution sensor array.  A schematic of this type of imaging system is shown in Figure~\ref{fig:SchemaMI}: an image is focused onto a spatial light modulator (SLM), passes through a blurring lens, and the resulting intensity image is integrated over a relatively large region in space.  The blurring lens spreads the energy in the image out spatially, allowing the sensor array to have gaps.  We show that under mild conditions, the resolution of this type of system is fundamentally limited by the resolution of the spatial light modulator; the spatial resolution of sensors and the blurring lens play only a minor role.  We also show that the diversity provided by the spatial light modulator makes the deconvolution process stable.  

Our mathematical analysis uses an idealized model for this imaging system.  We model the imaging process as a linear operator that maps an image into a set of indirect measurements.  Multiple batches of these measurements are taken, each with a different pattern on the SLM.  The favorable conditioning in the inverse problem comes from using a diverse set of patterns; we show that choosing the patterns at random results in improved acquisition efficiency in both the least squares and sparse reconstruction scenarios.  We show that the total number of measurements sufficient for least squares reconstruction is within a logarithmic factor of the number of pixels in the image which is nearly-optimal. For sparse reconstruction, a total number of masks within a logarithmic factor of the sparsity suffices to achieve a stable deconvolution.  Mathematically, these are statements about the singular values of both the imaging matrix as a whole and the submatrices formed from subsets of its columns.

Our main results, Theorems 1 and 2 in Section~\ref{sec:MainResults}, give stable reconstruction guarantees in terms of the number of masks $K$ used, the number of sensors $N$ in the array, the number of pixels $L$ in the high-resolution reconstruction, and parameters that characterize the joint spatial response of the blurring and sampling system.  These mathematical results give credence to the idea that diverse spatial light modulation makes spatial deconvolution a well-posed problem, even in the presence of heavy subsampling. 

\subsection{Contributions} 
\paragraph*{Stability of deconvolution via least squares} The relation between the scene image and the measured samples in the illustrated system can be described mathematically by a matrix.  Therefore, the conditioning of this measurement matrix determines the stability of an ordinary deconvolution based on the standard least squares. We quantify the number of random masks that is sufficient to bound the conditioning of the measurement matrix. The obtained bounds show that the blurring and the subsampling determine a critical number of masks needed to guarantee that the measurement matrix is well-conditioned, and thus 
the deconvolution is stable. With every additional mask beyond this critical number the guaranteed conditioning of the measurement matrix improves.

\paragraph*{Sparse deconvolution} We also study the case where the scene image is sparse. In this scenario, as the number of samples can be much less than the number of the scene pixels, we refer to the reconstruction of the scene image as  \emph{compressive deconvolution} akin to Compressive Sensing (CS) \cite{candes_robust_2006,candes_near_2006,donoho_compressed_2006}. We characterize the sufficient number of masks to guarantee the Restricted Isometry Property (RIP) for the (scaled) measurement matrix and thus successful reconstruction of the scene image using $\ell_1$-minimization. The established bound show that depending on the effect of the blurring and the subsampling, if  the number of masks grow linearly with the sparsity of the scene image but merely logarithmically with its pixel count we can have suitable RIP constants for successful $\ell_1$-minimization. Therefore, the number of measurements can potentially be significantly smaller that the pixel count of the scene image.

\subsection{Background and related Work}
Classical deconvolution techniques can be broadly categorized in two frameworks based on their approaches to regularization of the inverse problem. Methods of the first category, including Wiener filtering and a variety of Bayesian methods, assume some stochastic model for the image or the blurring kernel that is often application specific. Methods of the second category, that are essentially some variants of the least squares, only use the deterministic spatial or spectral structures of the image such as smoothness for regularization. For a comprehensive survey of classic deconvolution methods for image restoration and reconstruction we refer the interested readers to \cite{banham_digital_1997} and \cite{puetter_digital_2005}.

In recent years there has been an increasing interest in the application of CS in various imaging modalities including but not limited to holography \cite{brady_compressive_2009}, coded aperture spectral imaging \cite{arce_compressive_2014}, fluorescent microscopy \cite{studer_compressive_2012}, and sub-wavelength imaging \cite{gazit_super-resolution_2009}. The CS-based imaging systems are particularly interesting in applications where the measurements are time-consuming or expensive. Furthermore, by exploiting the sparsity of the scene image, the CS imaging methods can operate at SNR regimes where conventional imaging methods may perform poorly.  A survey of practical advantages and challenges of various CS imaging systems can be found in \cite{compressed_willet_2011}. The first CS imaging system was introduced as the ``single-pixel camera'' in  \cite{duarte_single-pixel_2008} where a single sensor integrates the randomly masked versions of the scene image for a few different masks. Effectively, the single-pixel camera measures the inner product of the scene image and the randomly generated masks. Using the fact that natural images are often (nearly) sparse in some basis, it is shown in \cite{duarte_single-pixel_2008} that CS allows accurate image reconstruction in this single-pixel architecture.

In this paper, we consider the deconvolution problem in a random mask imaging system that is very similar to the single-pixel camera; the main difference is that the reflections from the digital micromirror device (DMD) are blurred in a controlled manner in order to allow sampling with a few sensors. Our goals is to determine how the performance of the system depends on the number of masks and the extent of the blurring and the subsampling. Based on an idealized mathematical model of the considered imaging system, we tie the number of masks sufficient for reconstruction of the scene image to certain characteristics of the blurring kernel and the subsampling operator. Because the integration that occurs at each sensor involves the convolution with the Point Spread Function (PSF) of the lens, our analysis has similarity with the analyses used for CS with random convolution \cite{romberg_compressive_2009, rauhut_2010_compressive, rauhut_2012_restricted}. These analyses, address the problems where convolution with a known random signal/filter is of interest. In our problem, however, the convolution is deterministic and randomness occurs as spatial phase modulation. Our theoretical results are based on recent theoretical developments regarding random matrices and chaos processes.

\subsection{Organization of the paper}
Section \ref{sec:ProblemSetup} elaborates on the considered imaging architecture and the (idealized) formulation of the deconvolution problems. The main theoretical guarantees are stated in Section \ref{sec:MainResults}. In Section~\ref{ssec:DLS}, we study the stability of the deconvolution for generic scene images by analyzing the behavior of the extreme singular values of the measurement matrix in terms of certain properties of the known blurring kernel and subsampling operator. Furthermore, in Section~\ref{ssec:SDL1}, the performance guarantees of $\ell_1$-minimization for recovering sparse images are stated in terms of the Restricted Isometry Property (RIP) of the measurement matrix. The proofs of these guarantees are provided in the appendices. The theoretical guarantees are validated by the  numerical simulations reported in Section \ref{sec:Experiments}. The concluding remarks are provided in Section \ref{sec:Conclusion}.
\subsection{Notation}
We use the following notation convention throughout this paper. Matrices and vectors are denoted by bold capital and small letters, respectively. The vectors $\mb{e}_i$ for $i=1,2,\dotsc$ denote the canonical basis vectors that are zero except at their $i$-th entry which is one. The diagonal matrix whose diagonal entries form a vector $\mb{x}$ is denoted by $\mb{D}_\mb{x}$. The matrix of diagonal entries of a matrix $\mb{X}$ is denoted by $\mr{diag}\left( \mb{X}\right)$. The vector norms $\left\Vert \cdot\right\Vert_p$ for $p\geq 1$ are the standard $\ell_p$-norms. The so called $\ell_0$-norm, which counts the nonzero entries of its argument, is denoted by $\left\Vert \cdot\right\Vert_0$. The matrix norms $\left\Vert\cdot\right\Vert$ and $\left\Vert\cdot\right\Vert_F$ are used to denote the operator norm and the Frobenius norm, respectively. The largest (or the smallest) eigenvalue of symmetric matrices are denoted by $\lambda_{\max}\left(\cdot\right)$ (or $\lambda_{\min}\left(\cdot\right)$). Also, we use $\mr{cond}\left(\cdot\right)$ to denote the condition number of matrices. Occasionally, expressions of the form $f\gtrsim g$ (or $f\lesssim g$) are used that should be interpreted as $ f \geq cg$ (or $c f\leq g$), where $c>0$ is some absolute constant.

\section{Problem Setup}
\label{sec:ProblemSetup}

\begin{figure*}
\centering
\includegraphics[width=0.75\textwidth]{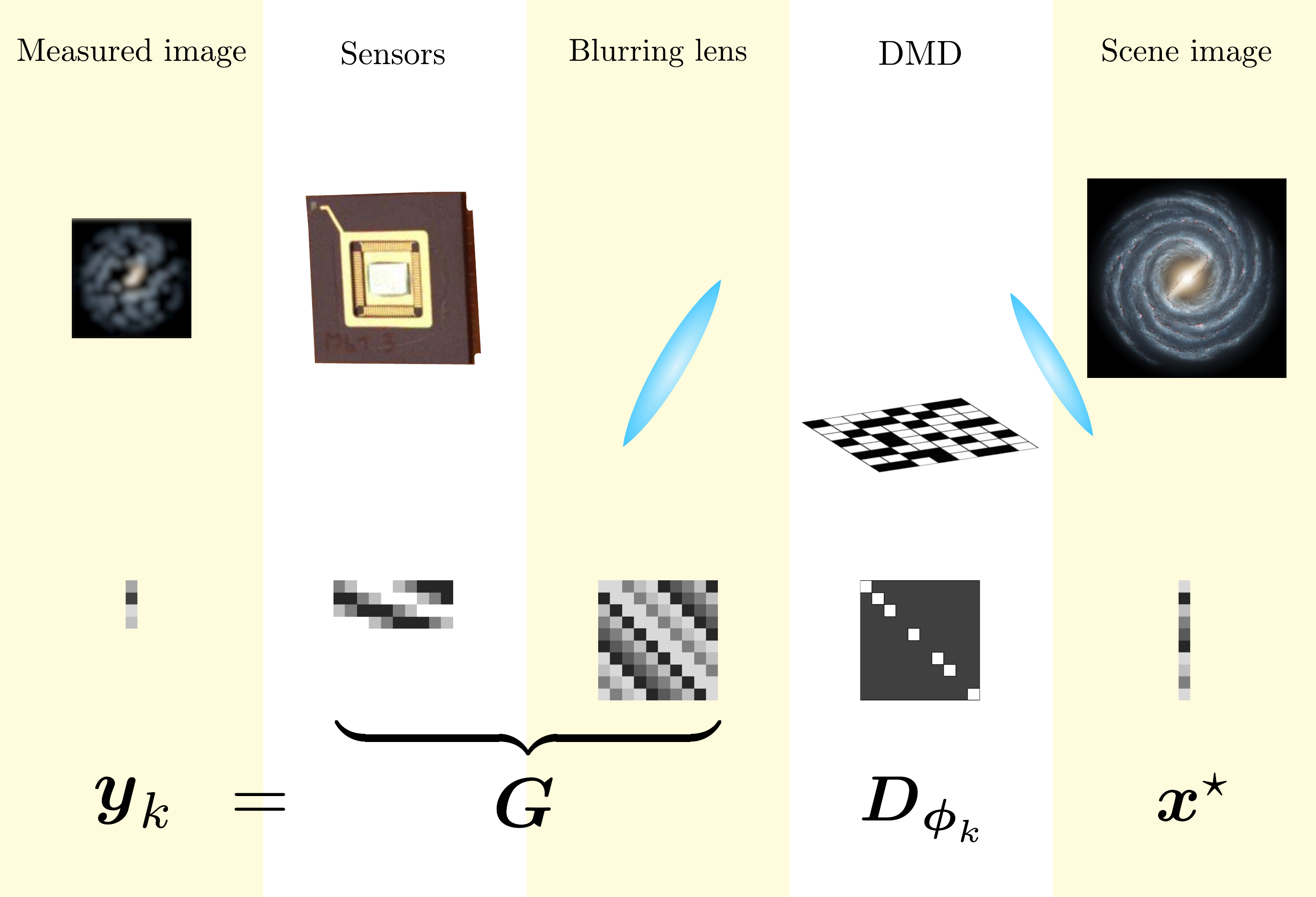}
\protect\caption{\label{fig:SchemaMI}Schematic of the masked imaging system. The scene image is focused on a DMD that acts as a spatial phase modulator. The masked image reflected from the DMD is blurred by a lens and then spatially subsampled by a few sensors. The result is one set of measurements corresponding to the chosen DMD pattern.}
\end{figure*}

The purpose of this paper is to develop a principled understanding of the type of imaging architecture depicted in Figure~\ref{fig:SchemaMI}.  The central abstraction we make is to model the acquisition process as the application of a matrix to an unknown vector.  This puts the image reconstruction problem squarely into the realm of linear algebra, allowing us to connect it to recent developed mathematics in that field.  In this section, we detail how this abstraction is made, and the algorithms we use to perform the reconstruction.

The main physical assumption we make is that the mapping between the true image $I(\mb{t})$ and a single measurement is a linear functional; each measurement can be written as
\[
	y_m = \langle I(\mb{t}), h_m(\mb{t})\rangle = \int I(\mb{t}) h_m(\mb{t})~d\mb{t}.
\]
The measurement ``test functions'' $h_m(\mb{t})$, which like the image are functions of a continuous 2D spatial index, are different for each sensor location and each pattern on the SLM.  They also depend on the point spread function for the blurring lens, and the size and shape of the sensor.  Our results will depend on general properties of the ensemble of these functions, but we do not assume that they have any particular form.  We will assume throughout that the $h_m(\mb{t})$ are known.  

We model the action of the spatial light modulator as follows: the input image is divided into small square regions, and a weight is applied uniformly over each region.  In the schematic above, the SLM is depicted as a DMD, implying that our weights are binary values, and are either $0$ or $1$.  The mathematical analysis in the next section will use $\pm 1$ for the binary weights; the analysis is smoother with weights that are zero-mean.  But if we have measurements using 0/1 weights $\mb{\phi}$, we can use the linearity of the system to easily generate the corresponding measurement $2\mb{\phi}-1$, which has entries of $\pm 1$.  All that is needed to perform this transformation is a single measurement with all of the weights equal to $1$. While the the choice of zero-mean random masks have some importance for the purpose of theoretical analysis, the functionality of the imaging system and the reconstruction algorithms are independent of this choice. We support this notion by a simulation on a 2D image with binary (i.e., $0/1$) masks in Section \ref{sec:Experiments}.

As with any inverse problem whose solution we actually want to compute, we model the underlying image $I(\mb{t})$ as lying in a finite dimensional subspace spanned by a set of known basis functions $\psi_\ell(\mb{t}), \ell=1,\ldots,L$.  We can write
\[
	I(\mb{t}) = \sum_{\ell=1}^L x_\ell^\star \psi_\ell(\mb{t}).
\]
To reconstruct $I$, we estimate the vector of expansion coefficients $\mb{x}\in\mbb{R}^L$.  With the basis model in place, 
we can now write the entire set of $M$ measurements $\mb{y}$ as an $M\times L$ matrix $\mb{H}$ applied to $\mb{x}$:
\[
	\mb{y} = \mb{H}\mb{x}^\star + \text{error},
\]
where
\[
	H(m,\ell) = \langle \psi_\ell(\mb{t}),h_m(\mb{t})\rangle = \int h_m(\mb{t})\psi_\ell(\mb{t})~d\mb{t}.
\]
The error term above is meant to encapsulate errors from all sources, including modeling inaccuracies (the underlying image does not truly lie in the span of the $\psi_\ell$) and the presence of noise in the measurements.

Our analysis uses a standard discretization: the $\psi_\ell(\mb{t})$ are indicator functions on the same small square regions over which the SLM divides the image.  As we describe further below, this allows us to write the action of the SLM on the image expansion coefficients as multiplication by a diagonal matrix.  This opens a path for the mathematical analysis of the architecture.  However, this not necessarily the only basis for which might be used for discretization; the algorithms used for the reconstruction only require us to provide the matrix $\mb{H}$.

For a given pattern on the SLM, whose weights are entries in the vector $\mb{\phi}_{k}$, we divide the measurement operator into two parts: an $L\times L$ diagonal matrix $\mb{D}_{\mb{\phi}_{k}}$ which maps the $\{x(i)\}$ to $\{\phi_k(i)x(i)\}$, and an $N\times L$ matrix $\mb{G}$, which maps the modulated coefficients into the values measured at the sensor array:
\[
	\mb{y}_k = \mb{G}\mb{D}_{\mb{\phi}_k}\mb{x}^\star + \mr{error}. 
\]
The measurements can be stacked and written compactly as
\begin{align*}
\mb{y} &=  \left[
						\begin{array}{c}
							\mb{y}_1 \\
							\mb{y}_2 \\
							\vdots \\
							\mb{y}_K
						\end{array} 						 
				\right] = \mb{H}\mb{x}^\star+\mr{error},
\end{align*}
where the measurement matrix $\mb{H}$ is given by
\begin{align}
\mb{H} & =\left[
												\begin{array}{c}
													\mb{G}\mb{D}_{\mb{\phi}_{1}} \\
													\mb{G}\mb{D}_{\mb{\phi}_{2}} \\
													 \vdots \\ 
													 \mb{G}\mb{D}_{\mb{\phi}_{K}}
												\end{array}
												\right]. \label{eq:mmx}
\end{align}
The analysis below draws the entries in each $\mb{\phi}_k$ independently and taking values $\pm 1$ with equal probability.  The number of patterns used is $K$, for a total of $M=KN$ measurements of the image.

The matrix $\mb{G}$ models the joint action of the blurring lens and the sensor.  If, for example, we model the sensors as taking point samples, then the $n$th row of $\mb{G}$ will contain the inner products $\langle \psi_\ell(\mb{t}), p(\mb{\tau}_n-\mb{t})\rangle$, where $p(\mb{t})$ is the point spread function of the lens, and $\mb{\tau}_n$ is the sample location for sensor $n$.  Alternatively, if we model the sensors as integrating over a certain region, we replace $p(\cdot)$ in the previous expression with its convolution with an indicator function over this region.  As mentioned before, the analysis below does not depend on the particular physical model that we use, but rather on general properties of $\mb{G}$.  Qualitatively, these amount to $\mb{G}$ not having any blind spots --- each part of the image influences the reading on at least one of the sensors --- and each sensor having a distinct view of the image.
 
In this paper, we will take $\mb{G}$ to be known, making the recovery of $\mb{x}^\star$ a certain kind of generalized deconvolution problem.  We analyze the performances of two algorithms, one of which assumes no structure in the image, and the other tailored to the case where the image is sparse.
\begin{enumerate}
\item \emph{Deconvolution of generic scene images}: In this scenario, the aim is to estimate the scene image $\mb{x}^\star$ that has no specific structure. For this deconvolution problem we analyze the least squares estimator
\begin{align}
\widehat{\mb{x}} & :=\argmin_{\mb{x}}\left\| \mb{H}\mb{x}-\mb{y}\right\|_2^2 =  \sum_{k=1}^{K}\left\Vert \mb{G}\mb{D}_{\mb{\phi}_{k}}\mb{x}-\mb{y}_{k}\right\Vert _{2}^{2}.\label{eq:ls}
\end{align}
The performance of the above least squares depends solely on the conditioning of the cumulative measurement matrix $\mb{H}$ in \eqref{eq:mmx}.
This is a matrix with structured randomness (since the $\mb{\phi}_k$ are random); we will show that it is well-conditioned if the total number of measurements $KN$ slightly larger than then number of pixels $L$ we use to discretize the image.

\item \emph{Deconvolution of sparse scene images}: We also study the more interesting problem of recovering sparse images in the described imaging system. With the measurement matrix $\mb{H}$ defined in \eqref{eq:mmx}, we consider the estimator
\begin{align}
\widehat{\mb{x}} & :=\argmin_{\mb{x}}\left\Vert \mb{x}\right\Vert _{1}\label{eq:l1-min}\\
 & \hspace{1.5em}\text{subject to }\mb{H}\mb{x}=\mb{y},\nonumber 
\end{align}
 which is inspired by the CS framework. Note that we consider only error-free measurements; the extension to the case of corrupted measurements is the same as in well-known CS approaches. A common sufficient condition used in CS to guarantee accuracy of not only the $\ell_1$-minimization algorithm, but also a variety of greedy algorithms, is the Restricted Isometry Property (RIP) (see \cite{foucart_sparse_2012} and references therein). A matrix $\mb{A}$ is said to satisfy the RIP with constant $\delta_S\in\left(0,1\right)$ if 
 \begin{align*}
\left(1-\delta_S\right) \left\Vert \mb{x}\right\Vert^2_2 & \leq  \left\Vert \mb{A}\mb{x}\right\Vert^2_2 \leq \left(1+\delta_S\right)\left\Vert \mb{x}\right\Vert^2_2,
\end{align*}
holds for all $S$-sparse vectors $\mb{x}$. Exact recovery of $S$-sparse signals via $\ell_1$-minimization is shown in \cite{candes_restricted_2008} under the condition $\delta_{2S}<\sqrt{2}-1$, which was later improved to $\delta_{2S}<3/\left(4+\sqrt{6}\right)$ in \cite{foucart_note_2010}.
 
 We provide a sufficient condition on the number of masks that can guarantee RIP for the measurement matrix \eqref{eq:mmx}. Our  goal is to show that the sparse scene image can be recovered through \eqref{eq:l1-min}, even if the number of masks grows much slower than the dimension of the target image at a rate that is almost linear in its sparsity. The constant factor in the rate is determined by the relative sensitivity of the measurements to different pixels of the scene image.
\end{enumerate}

\section{Main Results}\label{sec:MainResults}
To state the main results we use the shorthand
\begin{align*}
\rho  &:=\left\Vert \mb{G}\right\Vert,&
\theta_{\max}  &:=\max_{1\leq i\leq L}\left\Vert \mb{G}\mb{e}_{i}\right\Vert _{2}, &
\theta_{\min}  &:=\min_{1\leq i\leq L}\left\Vert \mb{G}\mb{e}_{i}\right\Vert _{2},
\end{align*}
respectively for the spectral norm, the largest column $\ell_2$-norm, and the smallest column $\ell_2$-norm of $\mb{G}$.

 Intuitively, to ensure identifiability for every scene image in the imaging system described in Section \ref{sec:ProblemSetup}, the measurements need to retain the relative intensity of the scene pixels to a great extent. For instance, no scene image with only a single active pixel should be mapped to an all-zero measurement (i.e., $\theta_{\min}>0$). Similarly, there should be no drastic amplification of any pixel with respect to other pixels. These requirements translate into desirability of matrices $\mb{G}$ whose columns $\ell_2$-norms are almost equal (i.e. $\tfrac{\theta_{\max}}{\theta_{\min}}\approx 1$). In fact, straightforward calculations reveal that the expectation of the matrix $\mb{H}^\mr{T}\mb{H}$, which describes the system, has a condition number equal to $\left(\tfrac{\theta_{\max}}{\theta_{\min}}\right)^2$.
 
 \subsection{Deconvolution by Least Squares}\label{ssec:DLS}
 As discussed above, identifiability of the scene image in the considered imaging system depends on certain characteristics of the matrix $\mb{H}^\mr{T}\mb{H}$ where $\mb{H}$ is given by \eqref{eq:mmx}. The conditioning of the same matrix determines the stability of the estimate obtained by the least squares estimator \eqref{eq:ls}. To verify this fact, it suffices to observe that the least squares solution can be written as 
 \begin{align*}
	 \widehat{\mb{x}}&=\left(\mb{H}^\mr{T}\mb{H}\right)^{-1}\mb{H}^\mr{T}\left(\mb{y}^\star+\mb{z}\right)=\mb{x}^\star+\left(\mb{H}^\mr{T}\mb{H}\right)^{-1}\mb{H}^\mr{T}  \mb{z},
 \end{align*}
 where $\mb{y}^\star=\mb{H}\mb{x}^\star$ denotes the vector of error-free measurements and $\mb{z}$ is a measurement perturbation. It is clear from the equation above that the conditioning of $\left(\mb{H}^\mr{T}\mb{H}\right)^{-1}\mb{H}^\mr{T}$, or equivalently that of $\mb{H}^\mr{T}\mb{H}$, determines the stability against the perturbation. The following theorem establishes a relation between the number of
applied masks (i.e., $K$) and the condition number of $\mb{H}^\mr{T}\mb{H}$.
\begin{theorem}[Conditioning of the least squares for deconvolution]
\label{thm:LS} Suppose that 
\begin{align}
K & \geq\frac{\beta+1}{\log 4 -1}\delta^{-2}\frac{\rho^{2}}{\theta_{\min}^{2}}\log L,\label{eq:LS-bound}
\end{align}
 for absolute constants $\delta\in\left[0,1\right]$ and $\beta>0$.
Then with probability at least $1-2L^{-\beta}$ we have 
\begin{align*}
\mr{cond}\left(\mb{H}^{\mr{T}}\mb{H}\right) & \leq\frac{1+\delta}{1-\delta}\cdot\frac{\theta_{\max}^{2}}{\theta_{\min}^{2}}.
\end{align*}

\end{theorem}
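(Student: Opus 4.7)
\medskip
\noindent\textbf{Proof proposal.}
The plan is to view $\mb{H}^{\mr{T}}\mb{H}$ as a sum of independent random positive semidefinite matrices indexed by the masks, and apply a matrix Chernoff bound in the style of Tropp. Writing
\begin{align*}
\mb{H}^{\mr{T}}\mb{H} \;=\; \sum_{k=1}^{K} \mb{X}_k,\qquad \mb{X}_k \;:=\; \mb{D}_{\mb{\phi}_k}\mb{G}^{\mr{T}}\mb{G}\,\mb{D}_{\mb{\phi}_k},
\end{align*}
the $\mb{X}_k$ are independent and PSD. A direct entry-wise computation using $\mbb{E}[\phi_k(i)\phi_k(j)]=\delta_{ij}$ shows that $\mbb{E}\mb{X}_k = \mr{diag}(\mb{G}^{\mr{T}}\mb{G})$, so
\begin{align*}
\mbb{E}\,\mb{H}^{\mr{T}}\mb{H} \;=\; K\cdot \mr{diag}(\mb{G}^{\mr{T}}\mb{G}),
\end{align*}
a diagonal matrix whose $i$-th entry is $K\|\mb{G}\mb{e}_i\|_2^2$. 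Consequently the eigenvalues of $\mbb{E}\,\mb{H}^{\mr{T}}\mb{H}$ are pinched between $K\theta_{\min}^{2}$ and $K\theta_{\max}^{2}$, which already yields the target condition number $\theta_{\max}^{2}/\theta_{\min}^{2}$ in expectation.

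Next I need a uniform bound on each summand. Since $\mb{D}_{\mb{\phi}_k}$ is a signed diagonal matrix with operator norm $1$, the factorization $\mb{X}_k=(\mb{G}\mb{D}_{\mb{\phi}_k})^{\mr{T}}(\mb{G}\mb{D}_{\mb{\phi}_k})$ gives deterministically
\begin{align*}
\lambda_{\max}(\mb{X}_k) \;=\; \left\Vert\mb{G}\mb{D}_{\mb{\phi}_k}\right\Vert^{2} \;\leq\; \left\Vert\mb{G}\right\Vert^{2} \;=\; \rho^{2}.
\end{align*}
With $R=\rho^{2}$, $\mu_{\min}=K\theta_{\min}^{2}$ and $\mu_{\max}=K\theta_{\max}^{2}$, Tropp's matrix Chernoff bound (valid for independent PSD summands bounded by $R$ in operator norm) gives, for every $\delta\in[0,1]$,
\begin{align*}
\Pr\!\left[\lambda_{\min}\!\left(\mb{H}^{\mr{T}}\mb{H}\right)\leq(1-\delta)K\theta_{\min}^{2}\right] &\leq L\left[\frac{e^{-\delta}}{(1-\delta)^{1-\delta}}\right]^{K\theta_{\min}^{2}/\rho^{2}},\\
\Pr\!\left[\lambda_{\max}\!\left(\mb{H}^{\mr{T}}\mb{H}\right)\geq(1+\delta)K\theta_{\max}^{2}\right] &\leq L\left[\frac{e^{\delta}}{(1+\delta)^{1+\delta}}\right]^{K\theta_{\max}^{2}/\rho^{2}}.
\end{align*}

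The final step is to absorb the awkward Chernoff exponents into a clean quadratic, which is where the constant $\log 4-1$ enters. I plan to establish the elementary inequalities
\begin{align*}
-\delta-(1-\delta)\log(1-\delta) \;\leq\; -(\log 4-1)\delta^{2},\qquad \delta-(1+\delta)\log(1+\delta) \;\leq\; -(\log 4-1)\delta^{2},
\end{align*}
for all $\delta\in[0,1]$. Both inequalities are tight at $\delta=0$ and $\delta=1$ (with the endpoint $\delta=1$ pinning down the constant $\log 4-1$), and in between they follow from a short concavity argument: each left-hand side is concave with second derivative more negative than that of the quadratic near $0$, while the two sides meet at the endpoints. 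Substituting these bounds, using $\theta_{\max}^{2}\geq\theta_{\min}^{2}$ so that the lower-tail estimate is the worst of the two, and a union bound yields a total failure probability at most $2L\exp\!\bigl(-(\log 4-1)\delta^{2}K\theta_{\min}^{2}/\rho^{2}\bigr)$. Choosing $K$ as in \eqref{eq:LS-bound} makes this at most $2L^{-\beta}$. On the complementary event, $\lambda_{\max}(\mb{H}^{\mr{T}}\mb{H})\leq(1+\delta)K\theta_{\max}^{2}$ and $\lambda_{\min}(\mb{H}^{\mr{T}}\mb{H})\geq(1-\delta)K\theta_{\min}^{2}$, and taking the ratio gives the claimed bound on $\mr{cond}(\mb{H}^{\mr{T}}\mb{H})$.

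The main technical hurdle is not the concentration step itself but the calibration of the constant: the scalar inequalities above must be shown on the full interval $[0,1]$ so that the prefactor in \eqref{eq:LS-bound} is exactly $(\beta+1)/(\log 4-1)$. Everything else is essentially a bookkeeping exercise once the decomposition $\mb{X}_k=(\mb{G}\mb{D}_{\mb{\phi}_k})^{\mr{T}}(\mb{G}\mb{D}_{\mb{\phi}_k})$ and the identity $\mbb{E}\mb{X}_k=\mr{diag}(\mb{G}^{\mr{T}}\mb{G})$ are in place.
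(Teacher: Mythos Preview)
Your proposal is correct and follows essentially the same route as the paper: decompose $\mb{H}^{\mr{T}}\mb{H}=\sum_k \mb{D}_{\mb{\phi}_k}\mb{G}^{\mr{T}}\mb{G}\mb{D}_{\mb{\phi}_k}$, compute the expectation as $K\,\mr{diag}(\mb{G}^{\mr{T}}\mb{G})$, bound each summand by $\rho^2$, apply Tropp's matrix Chernoff inequality, and then simplify the exponents via $\psi(\pm\delta)\geq(\log 4-1)\delta^2$. One small slip: only the upper-tail inequality is tight at $\delta=1$ (indeed $\psi(1)=\log 4-1$, which is what fixes the constant), whereas $\psi(-1)=1>\log 4-1$, so your endpoint-matching concavity sketch needs a minor adjustment for the lower-tail case; the inequality itself is of course still true.
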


\begin{remark}
\rm
As the proof provided in the appendix shows, the bound in \eqref{eq:LS-bound} can be improved slightly to
 \begin{align*}
 K&\geq\left(\beta+1\right) \max\left\{ \frac{\rho^{2}}{\psi\left(-\delta\right)\theta_{\min}^{2}},\frac{\rho^{2}}{\psi\left(\delta\right)\theta_{\max}^{2}}\right\},
 \end{align*} 
 where $\psi\left(t\right):=\left(1+t\right)\log\left(1+t\right)-t$. The bound \eqref{eq:LS-bound} is preferred merely because of its simpler expression. 
 \end{remark}
 \begin{remark}
 \rm
 It is worthwhile to inspect the tightness of the bound imposed by \eqref{eq:LS-bound} qualitatively. We can express the term $\rho^2/\theta^2_{\min}$ in \eqref{eq:LS-bound} as the product of $\left\Vert\mb{G}\right\Vert^2_F/\theta^2_{\min}$ and $\rho^2/\left\Vert\mb{G}\right\Vert^2_F$. As discussed above,  for a fixed number of masks the more uneven the column norms of $\mb{G}$ are, the more unbalanced the pixel amplification/attenuation and thereby the more unstable the least squares become. The first term (i.e., $\left\Vert\mb{G}\right\Vert^2_F/\theta^2_{\min}$) that varies in the interval $\left[L,\infty\right)$, measures how equally the energy of $\mb{G}$ is distributed among its columns. Another crucial factor that determines the required number of masks is the redundancy of the measurements that can be captured by the rank of the matrix $\mb{G}$. A severely rank-deficient $\mb{G}$ provides less information per mask than a full-rank $\mb{G}$. The second term mentioned above (i.e., $\rho^2/\left\Vert\mb{G}\right\Vert^2_F$) that varies in the interval $\left[1/\mr{rank}\left(\mb{G}\right),1\right]$ quantifies the dependence on the rank of $\mb{G}$. In the ideal case of a full-rank $\mb{G}$ with equally normed columns \eqref{eq:LS-bound} imposes $K\gtrsim \frac{L\log L}{N}$ which is suboptimal merely by a factor of $\log L$. In the case that $\mb{G}$ has columns of equal norm, but $\mr{rank}\left(\mb{G}\right)=1$, \eqref{eq:LS-bound} requires $K\gtrsim L\log L$ that is also suboptimal by a mere factor of $\log L$. 
\end{remark}

\subsection{Sparse Deconvolution by $\ell_1$-Minimization}\label{ssec:SDL1}
It is also desirable to ensure that the $\ell_1$-minimization \eqref{eq:l1-min} is stable and its performance degrades gracefully with perturbations. Significant shrinkage or expansion of the distances between different sparse signals under the measurement matrix $\mb{H}$ can lead to ambiguity or noise amplification, respectively. The RIP provides a formal characterization of the measurement matrices that have the desired properties. The next theorem establishes the RIP for a properly scaled version of the matrix $\mb{H}$. Below, the set of $S$-sparse vectors on the $L$-dimensional unit sphere is denoted by 
\begin{align*}\mc{D}_{S,L}&:=\left\{ \mb{x}\in\mbb{R}^{L}\mid\left\Vert \mb{x}\right\Vert _{2}=1\text{ and }\left\Vert \mb{x}\right\Vert _{0}\leq S\right\} .
\end{align*}

\begin{theorem}[RIP for sparse deconvolution]
\label{thm:SparseLS} Let $\mu:=\frac{\theta^{2}_{\max}}{\theta^{2}_{\min}}\geq 1$, $\theta^2_\mr{avg}:=\frac{\theta^2_{\max}+\theta^2_{\min}}{2}$, and suppose that
\begin{align*}
K & \gtrsim\delta^{-2}\mu^2 S\log L,
\end{align*}
 for some parameter $\delta\in\left(0,1\right)$. Then 
\begin{alignat}{2}
1-\frac{\mu-1+2\delta}{\mu+1} & \leq\frac{1}{\theta^2_{\mr{avg}}}\cdot\frac{\left\Vert \mb{H}\mb{x}\right\Vert _{2}^{2}}{K} & \leq 1+\frac{\mu-1+2\delta}{\mu+1}\label{eq:SLS-bound}
\end{alignat}
holds for all $\mb{x}\in\mc{D}_{S,L} $
with probability at least $1-2\mr{e}^{-CK\min\left\{\delta/\left(2\mu\right),\delta^{2}/\left(2\mu\right)^2\right\}}$,
where $C>0$ is an absolute constant.\end{theorem}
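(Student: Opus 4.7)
The first move is to reinterpret $\left\Vert\mb{H}\mb{x}\right\Vert_2^2$ as a second-order chaos process. Using the identity $\mb{D}_{\mb{\phi}_k}\mb{x} = \mb{D}_{\mb{x}}\mb{\phi}_k$, the $k$-th block of $\mb{H}\mb{x}$ becomes $\mb{G}\mb{D}_{\mb{x}}\mb{\phi}_k$, and concatenating the $K$ blocks yields
\begin{align*}
\left\Vert \mb{H}\mb{x}\right\Vert_2^2 \;=\; \left\Vert \mb{B}_{\mb{x}}\mb{\xi}\right\Vert_2^2,\qquad \mb{B}_{\mb{x}} := \mb{I}_K\otimes(\mb{G}\mb{D}_{\mb{x}}),
\end{align*}
where $\mb{\xi}\in\{\pm 1\}^{KL}$ stacks the independent Rademacher vectors $\mb{\phi}_1,\ldots,\mb{\phi}_K$. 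This places the problem squarely in the setting of the Krahmer--Mendelson--Rauhut (KMR) suprema-of-chaos inequality applied to the matrix family $\mc{A}:=\{\mb{B}_{\mb{x}}:\mb{x}\in\mc{D}_{S,L}\}$, which is the tool alluded to in Section~\ref{sec:Intro}.

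Next I would isolate the bias of the mean from the concentration error. A direct computation yields $\mbb{E}\left\Vert \mb{B}_{\mb{x}}\mb{\xi}\right\Vert_2^2 = \left\Vert \mb{B}_{\mb{x}}\right\Vert_F^2 = K\sum_{i} x_i^2 \left\Vert \mb{G}\mb{e}_i\right\Vert_2^2$, which for $\mb{x}\in\mc{D}_{S,L}$ lies in $[K\theta_{\min}^2, K\theta_{\max}^2]$. After dividing by $K\theta^2_{\mr{avg}}$, this mean already lies in $[2/(\mu+1),\,2\mu/(\mu+1)]$, accounting for exactly the $(\mu-1)/(\mu+1)$ slack appearing in \eqref{eq:SLS-bound}. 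It therefore suffices to show that, with the claimed probability,
\begin{align*}
\sup_{\mb{x}\in\mc{D}_{S,L}}\left| \left\Vert \mb{B}_{\mb{x}}\mb{\xi}\right\Vert_2^2 - \mbb{E}\left\Vert \mb{B}_{\mb{x}}\mb{\xi}\right\Vert_2^2 \right| \;\leq\; \frac{2\delta}{\mu+1}\,K\theta^2_{\mr{avg}} \;=\; \delta\, K\theta_{\min}^2.
\end{align*}

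The KMR inequality requires three geometric parameters of $\mc{A}$. From $\left\Vert \mb{G}\mb{D}_{\mb{x}}\right\Vert \leq \rho\left\Vert \mb{x}\right\Vert_\infty\leq \rho$ I obtain the operator-norm radius $d_{2\to 2}(\mc{A})\leq \rho$, and from $\left\Vert \mb{B}_{\mb{x}}\right\Vert_F = \sqrt{K}\sqrt{\sum_{i} x_i^2\left\Vert \mb{G}\mb{e}_i\right\Vert_2^2}$ I get the Frobenius radius $d_F(\mc{A})\leq \sqrt{K}\,\theta_{\max}$. For Talagrand's $\gamma_2$ functional in the operator-norm metric, the Lipschitz estimate $\left\Vert \mb{B}_{\mb{x}}-\mb{B}_{\mb{y}}\right\Vert = \left\Vert \mb{G}\mb{D}_{\mb{x}-\mb{y}}\right\Vert \leq \rho\left\Vert \mb{x}-\mb{y}\right\Vert_\infty$ reduces the task to covering $\mc{D}_{S,L}$ in the $\ell_\infty$ metric. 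A union bound over the $\binom{L}{S}$ coordinate subspaces combined with a standard volumetric estimate on the $S$-dimensional unit ball gives $\log N(\mc{D}_{S,L},\left\Vert \cdot\right\Vert_\infty,\varepsilon)\lesssim S\log(L/\varepsilon)$, and Dudley's integral then yields $\gamma_2(\mc{A},\left\Vert \cdot\right\Vert)\lesssim \rho\sqrt{S\log L}$.

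Finally, I would invoke the KMR tail bound, whose mixed Bernstein-type estimate features $E = \gamma_2(\gamma_2+d_F)$, $V = d_{2\to 2}(\gamma_2+d_F)$, and $U = d_{2\to 2}^2$. Plugging in the estimates above and balancing the target deviation $t = \delta K\theta_{\min}^2$ against $E$ and $V$ is where the hypothesis $K\gtrsim \delta^{-2}\mu^2 S\log L$ is consumed: it is exactly the threshold at which the deterministic offset $E$ drops below $t$ and the two branches $\exp(-ct^2/V^2)$ and $\exp(-ct/U)$ align at the quoted rate $\exp\left(-cK\min\{\delta/(2\mu),\delta^2/(2\mu)^2\}\right)$. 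The main obstacle, as I see it, is the $\gamma_2$ estimate in Step~3: the Lipschitz reduction to $\ell_\infty$ introduces a factor of $\rho$ that must be reabsorbed against $\theta_{\min}^2$ through the normalization, and this accounting is precisely what produces the $\mu^2$ in the sample complexity. A secondary subtlety is to arrange the covering tightly enough (possibly in a mixed $\ell_2/\ell_\infty$ metric) that no extraneous logarithmic factors contaminate the final bound on $K$.
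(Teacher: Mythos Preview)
Your overall strategy---rewriting $\left\Vert\mb{H}\mb{x}\right\Vert_2^2$ as a Rademacher chaos via $\mb{B}_{\mb{x}}=\mb{I}_K\otimes(\mb{G}\mb{D}_{\mb{x}})$, separating the bias of the mean from the fluctuation, and invoking the Krahmer--Mendelson--Rauhut tail bound---is exactly the route the paper takes. The gap is in your parameter estimates for $\mc{A}$, and it is not cosmetic: with your bounds the argument does \emph{not} deliver the stated theorem.

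You take $d_{2\to 2}(\mc{A})\le\rho$ and $\gamma_2(\mc{A},\left\Vert\cdot\right\Vert)\lesssim\rho\sqrt{S\log L}$. Both are correct but too loose. The paper instead observes that for every $\mb{x}\in\mc{D}_{S,L}$,
\[
\left\Vert\mb{G}\mb{D}_{\mb{x}}\right\Vert\;\le\;\left\Vert\mb{G}\mb{D}_{\mb{x}}\right\Vert_F\;=\;\Bigl(\textstyle\sum_i x_i^2\left\Vert\mb{G}\mb{e}_i\right\Vert_2^2\Bigr)^{1/2}\;\le\;\theta_{\max},
\]
so $d(\mc{A})=\theta_{\max}$, not $\rho$. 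This sharper radius is what sets the upper limit in Dudley's integral; the $\rho$ that enters through the $\ell_\infty$ Lipschitz constant then contributes only an additive $\log(\rho/\theta_{\max})\le\tfrac12\log L$ inside the square root (using $\rho\le\theta_{\max}\sqrt{L}$), yielding $\gamma_2(\mc{A},\left\Vert\cdot\right\Vert)\lesssim\theta_{\max}\sqrt{S\log L}$. With this in hand, $U=\theta_{\max}^2$, $V\lesssim\theta_{\max}^2(\sqrt{S\log L}+\sqrt{K})$, and $E\lesssim\theta_{\max}^2(S\log L+\sqrt{KS\log L}+\sqrt{K})$, and now $E\lesssim\delta K\theta_{\min}^2$ is equivalent to $K\gtrsim\delta^{-2}\mu^2 S\log L$, while $t/U\ge\delta K/(2\mu)$ and $t/V\ge\delta\sqrt{K}/(2\mu)$ give the claimed exponent.

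Your closing remark---that the $\rho$ from the Lipschitz step is ``precisely what produces the $\mu^2$''---is the wrong accounting. In general $\rho^2/\theta_{\min}^2$ is not $\mu^2$; it can be as large as $L\mu$ (take $\mb{G}$ with nearly identical columns). With your estimates the sample-complexity condition becomes $K\gtrsim\delta^{-2}(\rho^2/\theta_{\min}^2)^2 S\log L$ (or at best $K\gtrsim\delta^{-1}(\rho^2/\theta_{\min}^2) S\log L$ from the $\gamma_2^2$ term), and the failure probability carries $\exp(-cK\delta\theta_{\min}^2/\rho^2)$, neither of which matches the statement. The missing idea is the one-line Frobenius domination above, which eliminates $\rho$ from every KMR parameter except inside a logarithm.
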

\begin{corollary}
Let the true signal $\mb{x}^\star$ be $S$-sparse. If $\mu$ is sufficiently close to one and $K\gtrsim\mu^2S\log L$,
then there exist $\delta\in\left(0,1\right)$ for which, \eqref{eq:l1-min}
recovers $\mb{x}^{\star}$ with probability
at least $1-2\mr{e}^{-CK\min\left\{ \delta/\left(2\mu\right),\delta^{2}/\left(2\mu\right)^2\right\}}$,
where $C>0$ is an absolute constant.
\end{corollary}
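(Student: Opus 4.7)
The plan is to deduce the corollary directly from Theorem~\ref{thm:SparseLS} combined with a standard RIP-based recovery guarantee, namely the result of \cite{foucart_note_2010} which ensures that the $\ell_1$-minimization \eqref{eq:l1-min} recovers every $S$-sparse vector $\mb{x}^\star$ whenever the measurement matrix satisfies $\delta_{2S} < 3/(4+\sqrt{6})$. Accordingly, I would apply Theorem~\ref{thm:SparseLS} at sparsity level $2S$ rather than $S$. Doing so shows that if $K \gtrsim \delta^{-2}\mu^2 \cdot 2S\log L$---which is equivalent to $K \gtrsim \mu^2 S\log L$ up to an absolute constant once $\delta$ is fixed---then with probability at least $1-2\mr{e}^{-CK\min\left\{\delta/\left(2\mu\right),\delta^{2}/\left(2\mu\right)^{2}\right\}}$ the scaled matrix $\mb{H}/(\theta_{\mr{avg}}\sqrt{K})$ has restricted isometry constant of order $2S$ bounded by $(\mu-1+2\delta)/(\mu+1)$.

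Next, I would choose $\delta$ so that this upper bound falls strictly below the recovery threshold $\tau := 3/(4+\sqrt{6})$. The inequality $(\mu-1+2\delta)/(\mu+1) < \tau$ rearranges to
\[
2\delta \;<\; (\tau-1)\mu + (\tau+1),
\]
so an admissible $\delta\in(0,1)$ exists precisely when $(\tau-1)\mu+(\tau+1)>0$, i.e.\ when $\mu < (1+\tau)/(1-\tau)$. Since $\mu\geq 1$ always, this inequality carves out a concrete nontrivial interval (roughly $1\leq \mu < 2.74$ for the \cite{foucart_note_2010} threshold) and makes the phrase ``$\mu$ sufficiently close to one'' quantitative. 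For any $\mu$ in this range I would fix one such $\delta$, at which point the sample complexity in Theorem~\ref{thm:SparseLS} collapses to $K\gtrsim \mu^2 S\log L$, matching the statement, and the probability estimate in the corollary is inherited verbatim from Theorem~\ref{thm:SparseLS}.

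The main obstacle is not algebraic but conceptual: the upper bound $(\mu-1+2\delta)/(\mu+1)$ produced by Theorem~\ref{thm:SparseLS} has an intrinsic floor $(\mu-1)/(\mu+1)$ that is independent of $K$ and cannot be driven to zero by taking more masks. Consequently, the RIP route to $\ell_1$-recovery only works while this floor remains below $\tau$, which is exactly why the corollary must restrict to $\mu$ close to one. Beyond this restriction, no additional probabilistic work is needed --- the corollary is essentially a book-keeping reduction combining Theorem~\ref{thm:SparseLS} with the deterministic sufficient condition of \cite{foucart_note_2010}.
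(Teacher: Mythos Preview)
Your proposal is correct and follows essentially the same route as the paper: apply Theorem~\ref{thm:SparseLS} at sparsity level $2S$, read off the RIP constant $(\mu-1+2\delta)/(\mu+1)$ for the scaled matrix $\mb{H}/(\theta_{\mr{avg}}\sqrt{K})$, and then pick $\delta\in(0,1)$ so that this constant lies below a known $\ell_1$-recovery threshold, which is possible precisely when $\mu$ is close enough to one. The only difference is that you make the admissible range of $\mu$ explicit via the inequality $\mu<(1+\tau)/(1-\tau)$, whereas the paper leaves ``sufficiently close to one'' unquantified.
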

\begin{proof}
It is known from the standard compressed sensing literature that if
a measurement matrix satisfies the RIP of order $2S$ with a sufficiently
small constant, then the associated $\ell_{1}$-minimization exactly
recovers any $S$-sparse signal \cite[and references therein]{foucart_sparse_2012}. Note that \eqref{eq:SLS-bound}
guarantees that if $K\gtrsim\delta^{-2}\mu^2S\log L$,
then with probability at least \mbox{$1-2\mr{e}^{-CK\min\left\{\delta/\left(2\mu\right),\delta^2/\left(2\mu\right)^2\right\}}$} the matrix $\frac{1}{\sqrt{K}\theta_{\mr{avg}}}\mb{H}$
obeys the RIP of order $2S$
with a constant no more than $\left(\mu+2\delta-1\right)/\left(\mu+1\right)=1-2\left(1-\delta\right)/\left(\mu+1\right)$.
 For a $\mu$ that is sufficiently close to one
we can choose a $\delta\in\left(0,1\right)$ for which the desired
RIP constant can be achieved and thus the  $\ell_{1}$-minimization \eqref{eq:l1-min} successfully recovers $\mb{x}^{\star}$.
\end{proof}
\begin{remark}
\rm
Robustness of the proposed sparse deconvolution to noise can also be established using the standard RIP-based bounds known for $\ell_1$-minimization \cite[and references therein]{foucart_sparse_2012}, but we do not reproduce them here. Note that, as can be expected, for large values of $\mu$ the RIP and thereby robustness cannot be guaranteed.
\end{remark}

\section{Numerical Experiments}\label{sec:Experiments}
In this section, we provide numerical experiments in an effort to understand the empirical relationship between the number of masks used and the success of our deconvolution methods.

In Section~\ref{ssec:1DSim} below, we describe the aggregate results of a large number of small numerical experiments.  We present empirical phase transition diagrams of both the generic and the sparse deconvolution methods through a series of Monte Carlo simulations on one-dimensional signals.
The purpose here is not necessarily to mimic a real imaging system as close as possible, but rather to get a feel for how well random matrices of the general form \eqref{eq:mmx} work for least squares and sparse recovery.

In Section~\ref{ssec:2DSim}, we describe a single simulation that more closely matches a physical acquistion.  We use a realistic point spread function for the blurring operation, and $\{0,1\}$ weights for the spatial light modulator.  The image is also not exactly sparse.  We note that the recovery is successful despite the slight departure from what was analyzed theoretically.

\subsection{Deconvolution for 1D signals}\label{ssec:1DSim}
We evaluated the empirical success rate of the proposed generic and sparse deconvolution methods on synthetic 1D signals in relation to the number of masks $K$. In these simulations the subsampling operator, which models $N = 4$ sensors with uniform spacing, is fixed during the trials. Furthermore, the simulated blurring is modeled by a random circulant matrix whose first column $\mb{p}$ is a random blurring kernel drawn independently in each trial. Therefore, the matrix $\mb{G}$ consists of $N = 4$ uniformly spaced rows of the circulant matrix generated by $\mb{p}$. Also, as mentioned in Section II the masks are drawn independently each with independent equiprobable $\pm 1$ entries.

\subsubsection{Deconvolution by Least Squares}
We considered signals of dimension $L=2048$ and varied the number of masks from $K=512$ to $K=2048$ with the steps of size $32$. For the (blurring) filter (i.e., $\mb{h}$) we consider an all-pass model and a low-pass model. For the all-pass model the filter is drawn from a $\mc{N}\left(\mb{0},\mb{I}\right)$ distribution. For the low-pass model, however, the filter is obtained by suppressing the DFT coefficients of a standard normal random vector whose indices are between $128$ and $1920=2047-127$ leading to a random filter of ``bandwidth'' $127$. Given that the stability of the least squares approach can be characterized by the conditioning of the associated measurement matrix, we merely measured the ratio of the smallest and largest singular values of the matrix of interest. For each value of $K$ we ran the simulation $100$ times and computed the $10\%$, the $50\%$, and $90\%$ quantiles of the condition number.  

Figures \ref{fig:LS-a} and \ref{fig:LS-b} show the condition number of the matrix associated with the least squares problem \eqref{eq:ls} in logarithmic scale versus the number of masks $K$ for the considered all-pass and low-pass models, respectively. The first few values of $K$ are cropped out because the matrix was severely ill-conditioned and the condition number reached values in the order of $10^4$. As expected, increasing the number of masks improves the considered condition number. Furthermore, for any given number of masks the condition number for the all-pass model is better that the low-pass model by a factor of about two. Comparison between the quantiles of the condition number at any given value of $K$ also suggests that the condition number has more fluctuations in the case of the low-pass model.

\begin{figure*}
\noindent
\begin{subfigure}[t]{0.5\textwidth}
\centering
\includegraphics[width=1\textwidth]{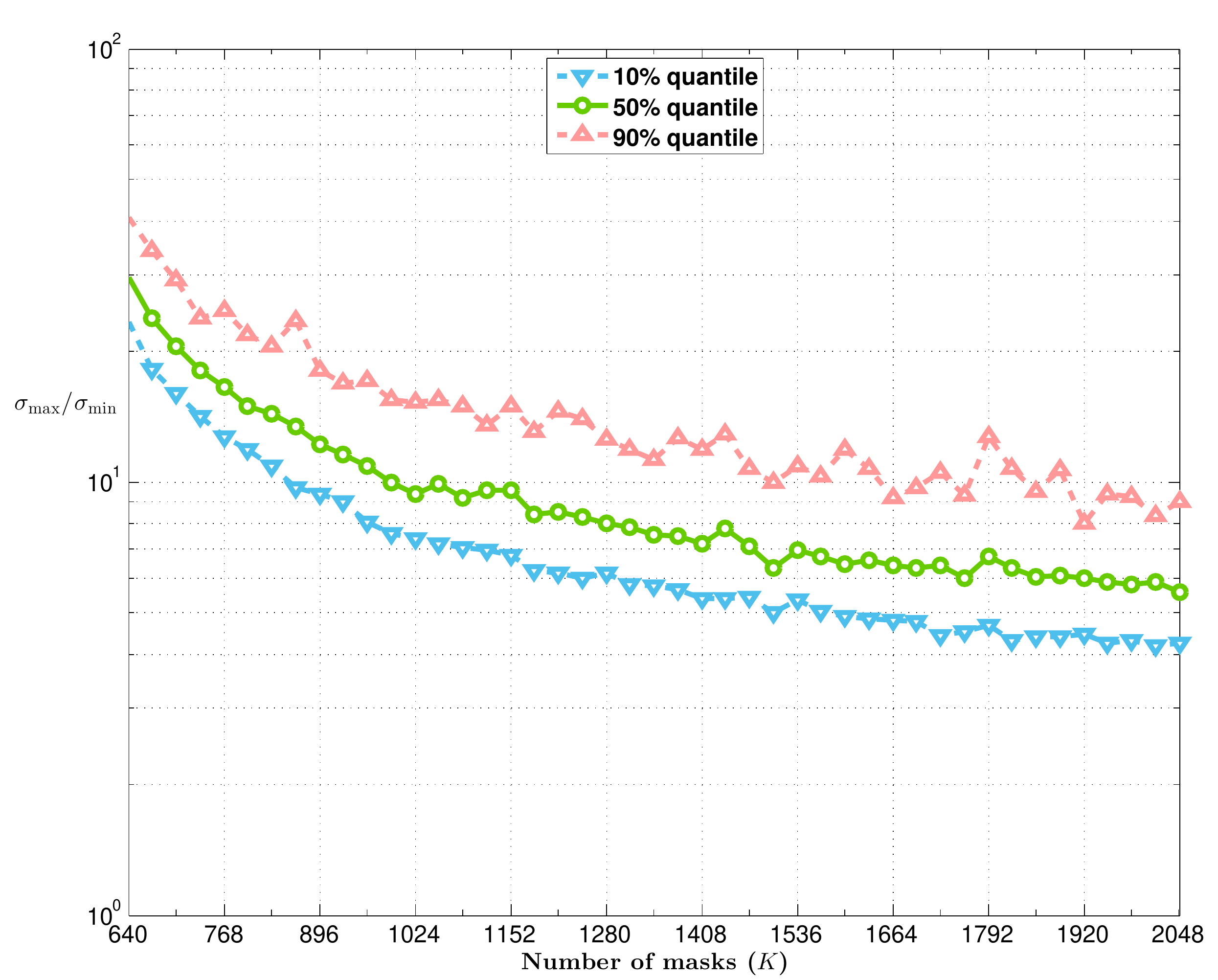}
\protect\caption{All-pass model}
\label{fig:LS-a}
\end{subfigure}
\begin{subfigure}[t]{0.5\textwidth}
\centering
\includegraphics[width=1\textwidth]{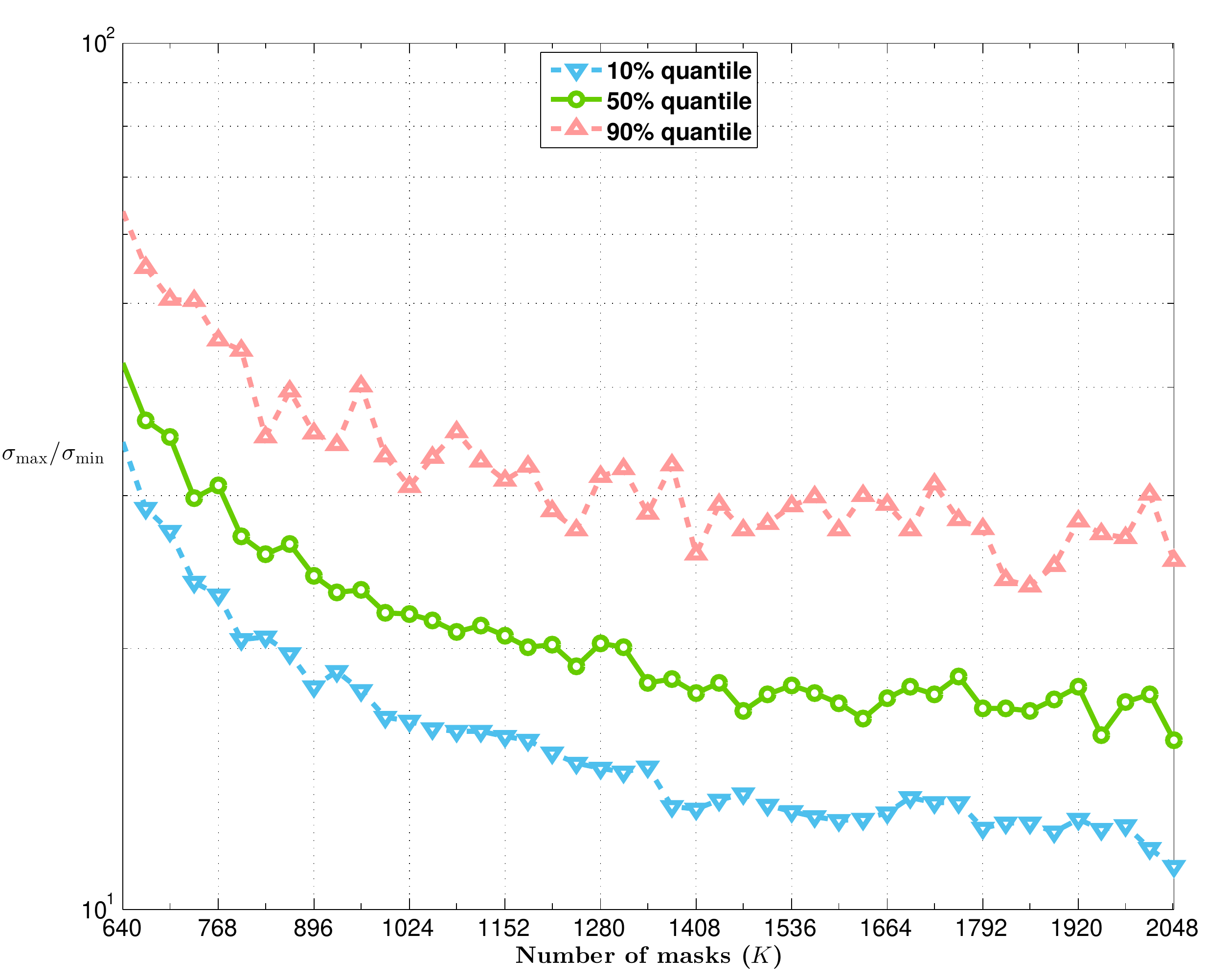}
\protect\caption{Low-pass model with a bandwidth of  127}
\label{fig:LS-b}
\end{subfigure}
\protect\caption{Conditioning of the measurement matrix in the least squares problem \eqref{eq:ls} vs. $K$, the number of masks.}
\label{fig:LS}
\end{figure*}

\subsubsection{Sparse Deconvolution by $\ell_1$-Minimization}
For this set of simulations we considered signals of dimension $L=2048$ whose nonzero entries are drawn independently from a standard normal distribution and are located on a support set drawn uniformly at random. We varied the number of masks from $K=8$ to $K=256$ with the steps of size eight. The (blurring) filter is generated according to the random all-pass and low-pass models described above for the case of unstructured deconvolution. These models might not be reasonable in incoherent imaging where the signal and the filter are both nonnegative. However, our theoretical guarantees similar to other results in CS suggests no significant dependence on the sign of the signal that can affect the simulation results.  For each value of $K$ the sparsity of the signal is increased as a multiple of eight until the successful recovery rate drops below $1\%$. Among each of the $100$ trials ran for each pair of $K$ and $S$ (i.e., the $\ell_0$-norm of the signal), those that yield a relative error no more than $5\%$  are counted as successful reconstructions. 

Figures \ref{fig:SLS-a} and \ref{fig:SLS-b} show the phase transition diagram of the estimate produced by \eqref{eq:l1-min} in terms of the number of masks $K$ and sparsity level $S$ respectively for the considered all-pass and low-pass models.  As can be seen from the figures phase transition boundary is almost linear in both models,  which is in agreement with the relation between $K$ and $S$ suggested by the Theorem \ref{thm:SparseLS}. While the phase transition boundary for the low-pass model is slightly lower (worse) than that of the all-pass model, the difference does not appear to be significant.

\begin{figure*}
\noindent
\begin{subfigure}[t]{0.5\textwidth}
\centering
\includegraphics[width=1\textwidth]{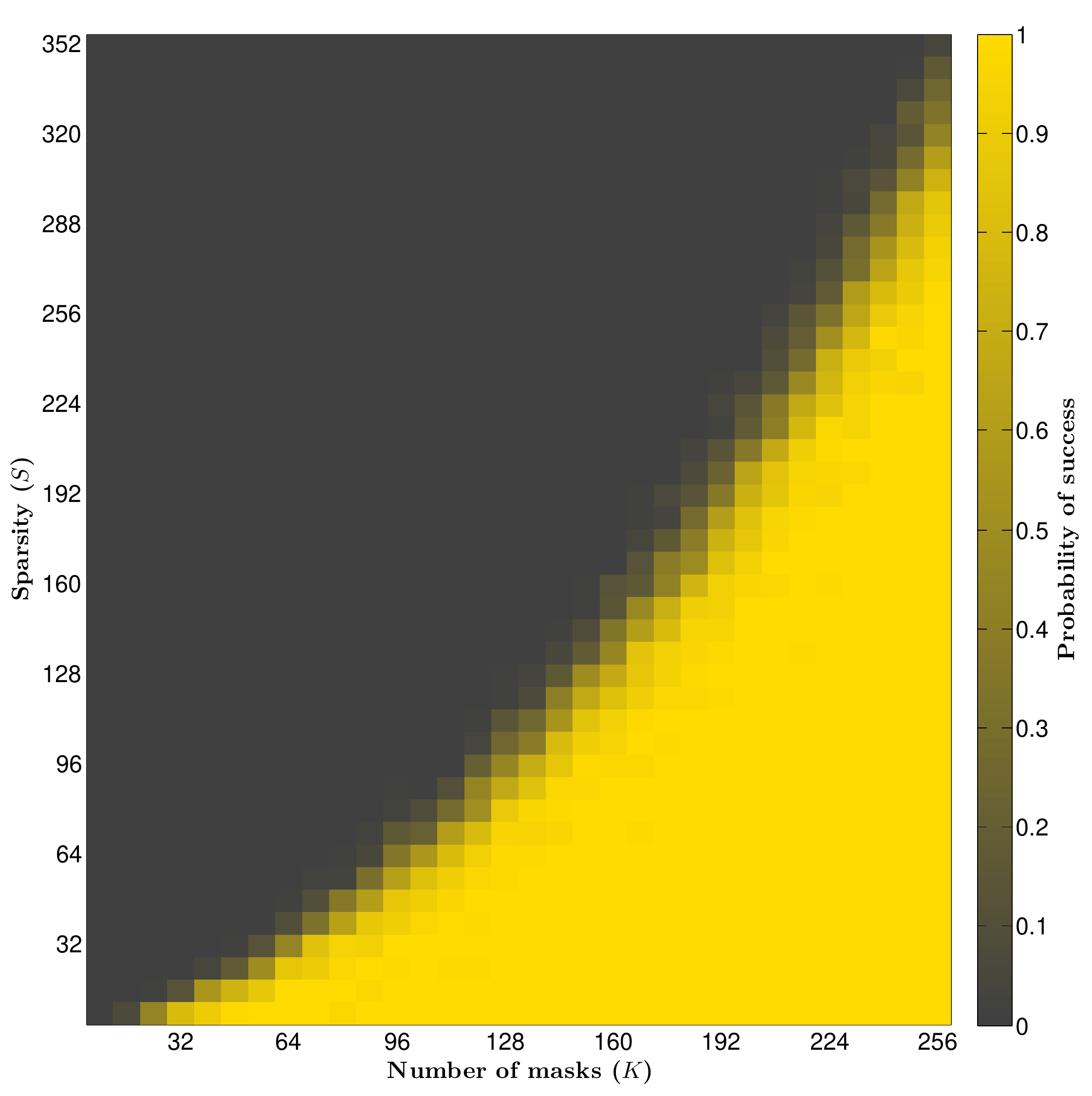}
\protect\caption{All-pass model}
\label{fig:SLS-a}
\end{subfigure}
\begin{subfigure}[t]{0.5\textwidth}
\centering
\includegraphics[width=1\textwidth]{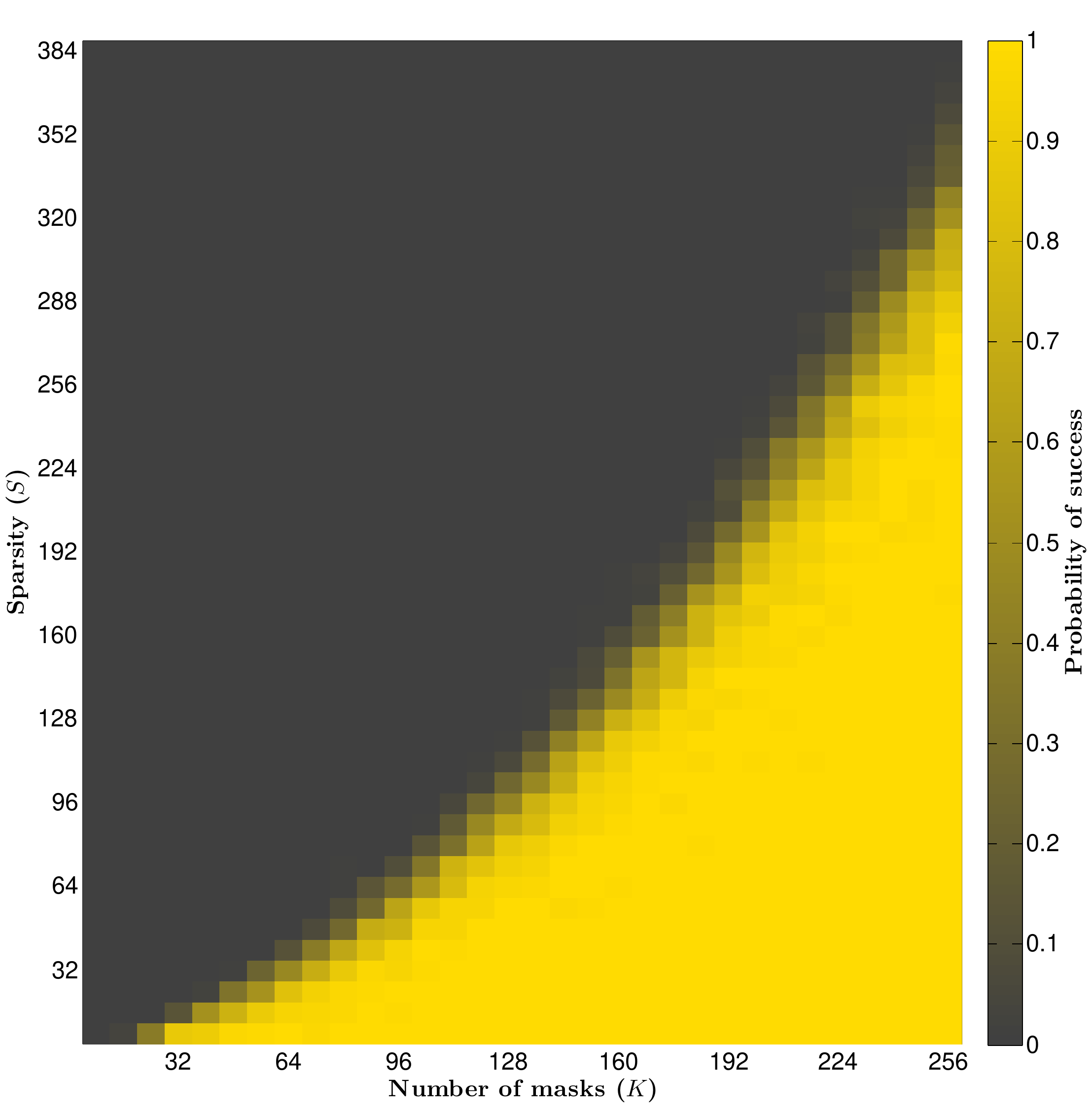}
\protect\caption{Low-pass model}
\label{fig:SLS-b}
\end{subfigure}
\protect\caption{The phase transition diagram of $\ell_1$-minimization in terms of $K$, the number of masks, and $S$, the $\ell_0$-norm of the signals of dimension $L=2048$.}
\label{fig:SLS}
\end{figure*}

\subsection{2D Sparse Deconvolution}\label{ssec:2DSim}
We also applied the $\ell_1$-minimization technique to recover a 2D image from (synthetic) measurements in the considered masked imaging system and the result is depicted in Figure \ref{fig:SCS}. Instead of reconstruction in the canonical basis, in this experiment we considered reconstruction of the image with respect to the 2D-DCT basis. Furthermore, the random masks are populated with i.i.d. equiprobable $0/1$  entries rather than the $\pm1$ entries used in the previous simulations. The target image at the top left corner is a fluorescent microscopy image\footnote{The image is an adaptation of an image on Wikimedia Commons that is available online at:\\* \texttt{http://commons.wikimedia.org/wiki/File:S\_cerevisiae\_septins.jpg}} that has $L= 188\times 256$ pixels with $8$-bit per color channel. Although the image appears to be sparse, only about $80\%$ of its pixels are zero-valued. In the 2D-DCT basis, however, the image is more \emph{compressible} as, for instance, its squared $\ell_1$-norm to $\ell_2$-norm ratio is less than $4428\approx 0.09 L$. The blurring kernel, a $128\times 128$ PSF generated using the PSFGenerator package \cite{kirshner_psf_2013}, is depicted in the bottom left corner of Figure \ref{fig:SCS}. For comparison, the blurred unmasked image is shown at the top center. We applied  $K=50$ random $0/1$ masks and subsampled each of the blurred outputs by a rate of  $1/11$  in each direction which yields $N=18\times 24$ scalar measurements per mask. These $50$ measurements are shown at the center of the second row. The overall undersampling rate is $K\times N/L\approx 45\%$. We applied the $\ell_1$-minimization with the same blur kernel for each color channel, independently. The relative error of estimated image obtained by $\ell_1$-minimization, shown at the top right corner of Figure \ref{fig:SCS}, is less than $10\%$. We repeated the same simulation with $\pm1$ masks and the relative error and the quality of the estimate remained virtually the same.

\begin{figure*}
\noindent
\centering
\includegraphics[width=1\textwidth]{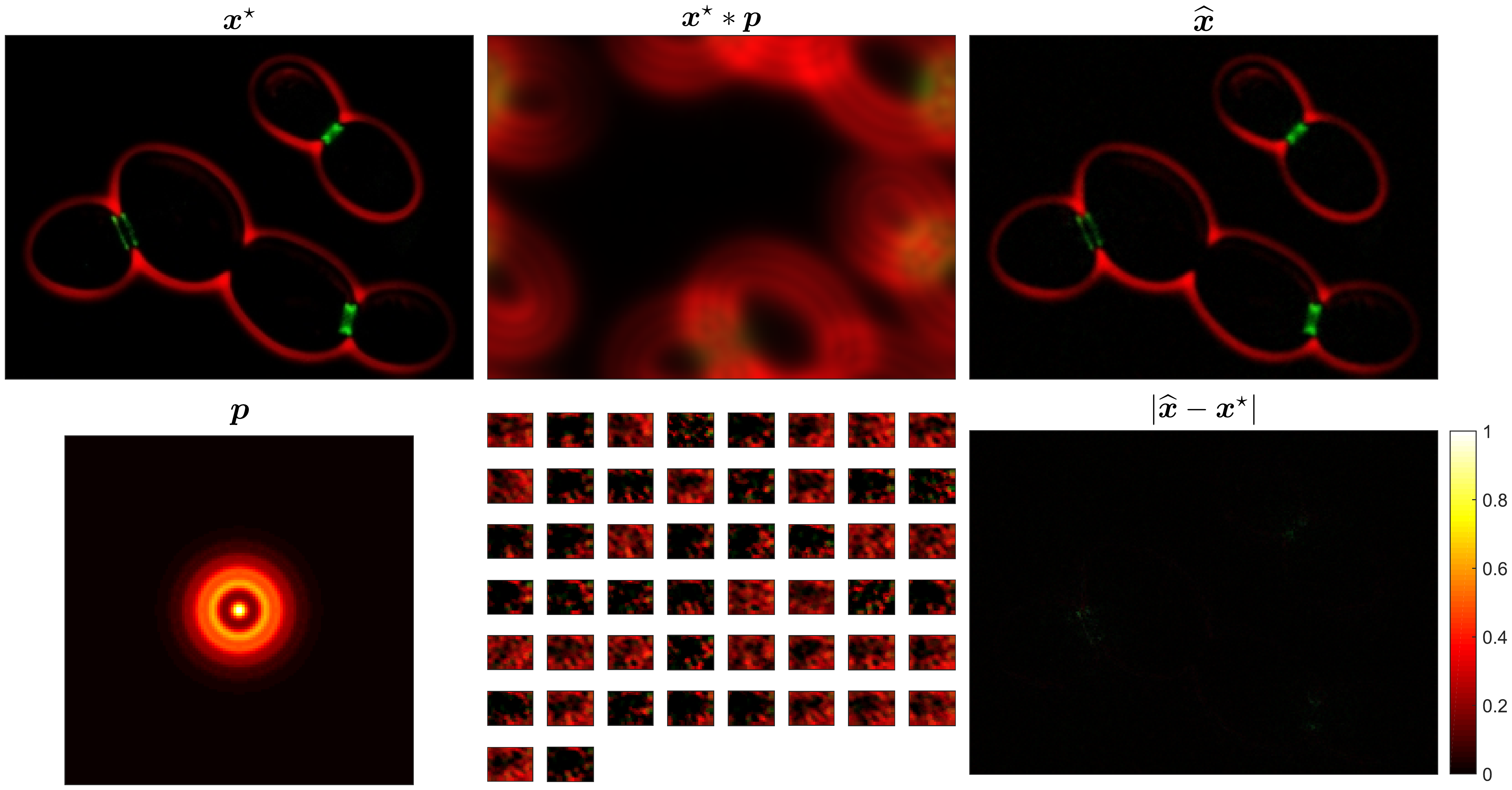}
\protect\caption{Sparse deconvolution applied to fluorescent micrograph of cell outlines (red) and \emph{septins} (green) in \emph{Saccharomyces}. The first column shows the original $188\times 256$ scene image (top) and the $128\times 128$ blur kernel (bottom). The second column shows the blurred image (top) for comparison and the $50$ masked, blurred, and subsampled measurements (bottom). The last column shows the estimated image (top) and the absolute error (bottom).}
\label{fig:SCS}
\end{figure*}

\section{Conclusion \label{sec:Conclusion}}
In this paper we studied the deconvolution problem in an idealized random mask imaging system. We quantified the number of masks that suffices to solve the deconvolution problem stably. For generic scene images and depending on the extent of the blurring and subsampling, we can have a well-conditioned deconvolution problem at the cost of oversampling at a rate logarithmic in the dimension of the target image. For sparse scene images, however, stable deconvolution through $\ell_1$-minimization is possible at much lower sampling rates. The number of required masks for stable sparse deconvolution can grow almost linearly in the sparsity at a rate much slower than the dimension of the target image. The established bounds can be satisfactory in certain regimes, but the sharpness of these bounds needs further investigation. For example, the sample complexity stated in Theorem \ref{thm:SparseLS} might be pessimistic as its dependence on the number of sensors is not ideal.

Considering more realistic mathematical models of the random mask imaging system introduces interesting and challenging problems that can be studied in the future. For example, analyzing the system under Poisson noise model would be of great interest as it is a more common and realistic model in optics. Furthermore, if the sensors only measure intensity, we would obtain a phase retrieval problem whose theoretical analysis may requires significantly different approaches.

\appendices
\section{Proof of Theorem \ref{thm:LS}}

To prove Theorem \ref{thm:LS}, we use a variant of matrix Chernoff
tail bounds for sums of random positive-semidefinite matrices due
to \cite[Corollary 5.2]{tropp_user-friendly_2011}.

\begin{IEEEproof}[Proof of Theorem \ref{thm:LS}.]
 Let \textbf{$\mb{Z}_{k}=\mb{D}_{\mb{\phi}_{k}}\mb{G}^{\mr{T}}\mb{G}\mb{D}_{\mb{\phi}_{k}}$}. Our goal is to show that for sufficiently large $K$, the random
matrix $\sum_{k=1}^{K}\mb{Z}_{k}=\mb{H}^{\mr{T}}\mb{H}$
is well-conditioned with high probability. It is straightforward to
verify that $\mb{Z}_{k}\succcurlyeq\mb{0}$ and $\left\Vert \mb{Z}_{k}\right\Vert =\rho^{2}$.
Furthermore, we have 
\begin{align*}
\mbb{E}\left[\sum_{k=1}^{K}\mb{Z}_{k}\right] & =K\mr{diag}\left(\mb{G}^{\mr{T}}\mb{G}\right),
\end{align*}
from which we can deduce that 
\begin{align*}
\lambda_{\min}\left(\mbb{E}\left[\sum_{k=1}^{K}\mb{Z}_{k}\right]\right) & =K\theta_{\min}^{2}
\end{align*}
 and 
\begin{align*}
\lambda_{\max}\left(\mbb{E}\left[\sum_{k=1}^{K}\mb{Z}_{k}\right]\right) & =K\theta_{\max}^{2}.
\end{align*}
 Therefore, with $\psi\left(\delta\right):=\left(1+\delta\right)\log\left(1+\delta\right)-\delta$, we can apply the matrix Chernoff bound \cite[Corollary 5.2]{tropp_user-friendly_2011}
and obtain 
\begin{align*}
\mbb{P}\left(\lambda_{\min}\left(\sum_{k=1}^{K}\mb{Z}_{k}\right)\leq\left(1-\delta\right)K\theta_{\min}^{2}\right) & \leq L\,e^{-\frac{K\theta_{\min}^{2}\psi\left(-\delta\right)}{\rho^{2}}} 
\end{align*}
for any $\delta\in\left[0,1\right]$ and 
\begin{align*}
\mbb{P}\left(\lambda_{\max}\left(\sum_{k=1}^{K}\mb{Z}_{k}\right)\geq\left(1+\delta\right)K\theta_{\max}^{2}\right) & \leq L\,e^{-\frac{K\theta_{\max}^{2}\psi\left(\delta\right)}{\rho^{2}}},
\end{align*}
for any $\delta\geq0$.
 In particular, using \eqref{eq:LS-bound} and the fact that $\psi\left(\pm\delta\right)\geq \left(\log 4 -1\right)\delta ^2$ for all $\delta\in \left(0,1\right)$, it is straightforward
to show that 
\begin{align*}
\mbb{P}\left(\lambda_{\min}\left(\sum_{k=1}^{K}\mb{Z}_{k}\right)\leq\left(1-\delta\right)K\theta_{\min}^{2}\right) & \leq L^{-\beta}
\end{align*}
 and 
\begin{align*}
\mbb{P}\left(\lambda_{\max}\left(\sum_{k=1}^{K}\mb{Z}_{k}\right)\geq\left(1+\delta\right)K\theta_{\max}^{2}\right) & \leq L^{-\beta}.
\end{align*}
Therefore, for values of $K$ that obey \eqref{eq:LS-bound} we have
\begin{align*}
\frac{\lambda_{\max}\left(\sum_{k=1}^{K}\mb{Z}_{k}\right)}{\lambda_{\min}\left(\sum_{k=1}^{K}\mb{Z}_{k}\right)} & \leq\frac{1+\delta}{1-\delta}\cdot\frac{\theta_{\max}^{2}}{\theta_{\min}^{2}},
\end{align*}
with probability at least $1-2L^{-\beta}$.
\end{IEEEproof}

\section{Proof of Theorem \ref{thm:SparseLS}}

Let $\mb{A}_{\mb{x}}:=\mb{G}\mb{D}_{\mb{x}}$
and 
\begin{align*}
\widetilde{\mb{A}}_{\mb{x}} & =\mb{I}_{K\times K}\otimes\mb{A}_{\mb{x}}=\left[\begin{array}{ccc}
\mb{A}_{\mb{x}} &  & \mb{0}\\
 & \ddots\\
\mb{0} &  & \mb{A}_{\mb{x}}
\end{array}\right].
\end{align*}
 Furthermore, define 
\begin{align}
\mc{A} & :=\left\{ \widetilde{\mb{A}}_{\mb{x}}\mid\mb{x}\in\mc{D}_{S,L}\right\} .\label{eq:Set}
\end{align}
To state the proof, we also need to define certain quantities as follows.
Let 
\begin{align}
d_{F}\left(\mc{A}\right) & :=\sup_{\mb{A}\in\mc{A}}\left\Vert \mb{A}\right\Vert _{F} 
 	=\sup_{\mb{x}\in\mc{D}_{S,L}}\left\Vert \widetilde{\mb{A}}_{\mb{x}}\right\Vert _{F}\nonumber \\
 & =\sqrt{K}\sup_{\mb{x}\in\mc{D}_{S,L}}\left\Vert \mb{A}_{\mb{x}}\right\Vert _{F}
   =\sqrt{K}\theta_{\max}\label{eq:FrobeniusNorm}
\end{align}
 and 
\begin{align}
d\left(\mc{A}\right) & :=\sup_{\mb{A}\in\mc{A}}\left\Vert \mb{A}\right\Vert 
  =\sup_{\mb{x}\in\mc{D}_{S,L}}\left\Vert \widetilde{\mb{A}}_{\mb{x}}\right\Vert \nonumber \\
 & =\sup_{\mb{x}\in\mc{D}_{S,L}}\left\Vert \mb{A}_{\mb{x}}\right\Vert 
  =\theta_{\max}.\label{eq:OperatorNorm}
\end{align}
Let $\gamma_{2}\left(\mc{A},\left\Vert \cdot\right\Vert \right)$
be the Talagrand's functional \cite{talagrand_generic_2005} for the set $\mc{A}$ with respect
to the operator norm. It is known that this functional can be bounded
from above by the Dudley's integral as 
\begin{align}
\gamma_{2}\left(\mc{A},\left\Vert \cdot\right\Vert \right) & \leq c_{1}\int_{0}^{d\left(\mc{A}\right)}\sqrt{\log N\left(\mc{A},\left\Vert \cdot\right\Vert ,u\right)}\mr{d}u,\label{eq:DudleyBound}
\end{align}
where $N\left(\mc{X},\left\Vert\cdot\right\Vert_\mc{X},\varepsilon\right)$ denotes the covering number of a set $\mc{X}$ with respect to $\varepsilon$-balls of the norm $\left\Vert\cdot\right\Vert_\mc{X}$ \cite{talagrand_generic_2005}. Proof of Theorem \ref{thm:SparseLS} follows from the results of \cite{krahmer_suprema_2014}. In particular, we use the following theorem.
 
\begin{theorem}[{Krahmer et al \cite[Theorem 1.4]{krahmer_suprema_2014}}]
\label{thm:RademacherChaos}Let $\mc{A}$ be a symmetric set
of matrices (i.e., $\mc{A}=-\mc{A}$) and $\phi$ be
a vector of i.i.d. Rademacher random variables. Then for some positive
absolute constants $c_{1}$ and $c_{2}$ we have
\begin{align*}
\mbb{P}\left(\!\sup_{\mb{A}\in\mc{A}}\left|\left\Vert \mb{A}\mb{\phi}\right\Vert _{2}^{2}-\mbb{E}\left\Vert \mb{A}\mb{\phi}\right\Vert _{2}^{2}\right|\geq c_{1}E+t\right) &\!\! \leq 2\mr{e}^{-c_{2}\min\left\{\!\frac{t^{2}}{V^{2}},\frac{t}{U}\!\right\}},
\end{align*}
 where $E=\gamma_{2}\left(\mc{A},\left\Vert \cdot\right\Vert \right)\left(\gamma_{2}\left(\mc{A},\left\Vert \cdot\right\Vert \right)+d_{F}\left(\mc{A}\right)\right)+d_{F}\left(\mc{A}\right)d\left(\mc{A}\right)$,
$U=d^{2}\left(\mc{A}\right)$, and $V=d\left(\mc{A}\right)\left(\gamma_{2}\left(\mc{A},\left\Vert \cdot\right\Vert \right)+d_{F}\left(\mc{A}\right)\right)$.
\end{theorem}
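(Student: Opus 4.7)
The plan is to prove the Krahmer--Mendelson--Rauhut chaos tail inequality by combining three ingredients: decoupling of the quadratic form, generic chaining for the expected supremum, and a two-scale concentration argument for the deviation from the mean. Setting $Z(\mb{A}) := \left\|\mb{A}\mb{\phi}\right\|_2^2 - \mbb{E}\left\|\mb{A}\mb{\phi}\right\|_2^2 = \mb{\phi}^{\mr{T}} \mb{A}^{\mr{T}}\mb{A}\mb{\phi} - \left\|\mb{A}\right\|_F^2$, the identity $\phi_i^2=1$ collapses the mean with the diagonal contribution and leaves $Z(\mb{A}) = \sum_{i\neq j}\phi_i\phi_j (\mb{A}^{\mr{T}}\mb{A})_{ij}$, a centered Rademacher chaos of order two indexed by $\mc{A}$. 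The first step would be a decoupling inequality in the spirit of de la Pe\~{n}a--Montgomery-Smith, replacing $Z(\mb{A})$ by the decoupled process $Z'(\mb{A}) := \langle \mb{A}\mb{\phi}, \mb{A}\mb{\phi}'\rangle$ for an independent Rademacher copy $\mb{\phi}'$, at the cost of only universal constants in both the mean and the tail.

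Second, I would bound the expected supremum of $|Z'(\mb{A})|$ to recover the term $E$. Conditioning on $\mb{\phi}'$, the map $\mb{A}\mapsto\langle\mb{A}\mb{\phi},\mb{A}\mb{\phi}'\rangle$ is a subgaussian process in $\mb{\phi}$ whose increments are controlled by the pseudo-metric $\left\|\mb{A}-\mb{A}''\right\|\cdot\left\|\mb{A}\mb{\phi}'\right\|_2$, so Talagrand's majorizing-measure theorem (upper-bounded by the Dudley integral \eqref{eq:DudleyBound}) yields a conditional supremum of order $\gamma_2(\mc{A},\left\|\cdot\right\|)\sup_{\mb{A}\in\mc{A}}\left\|\mb{A}\mb{\phi}'\right\|_2$. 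A second chaining pass in $\mb{\phi}'$ converts $\mbb{E}\sup_{\mb{A}\in\mc{A}}\left\|\mb{A}\mb{\phi}'\right\|_2$ into $\gamma_2(\mc{A},\left\|\cdot\right\|)+d_F(\mc{A})$, and the residual boundary term $d_F(\mc{A})d(\mc{A})$ stemming from the decoupling step closes out $\mbb{E}\sup_{\mb{A}}|Z(\mb{A})|\lesssim E$.

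Third, and most delicately, I would establish the tail bound with the mixed subgaussian/subexponential exponent $\min\left\{t^2/V^2,t/U\right\}$. Along an admissible sequence $(\mc{A}_n)$ realizing $\gamma_2(\mc{A},\left\|\cdot\right\|)$, each chaining increment is a quadratic form in $\mb{\phi}$ whose Hanson--Wright tail has a subgaussian regime at the Frobenius scale $\left\|\mb{A}^{\mr{T}}\mb{A}-(\mb{A}'')^{\mr{T}}\mb{A}''\right\|_F$ and a subexponential regime at the operator-norm scale $\left\|\mb{A}^{\mr{T}}\mb{A}-(\mb{A}'')^{\mr{T}}\mb{A}''\right\|$. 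The factorization $\mb{A}^{\mr{T}}\mb{A}-(\mb{A}'')^{\mr{T}}\mb{A}''=\mb{A}^{\mr{T}}(\mb{A}-\mb{A}'')+(\mb{A}-\mb{A}'')^{\mr{T}}\mb{A}''$ bounds both diameters by $d(\mc{A})$ times $\left\|\mb{A}-\mb{A}''\right\|_F$ or $\left\|\mb{A}-\mb{A}''\right\|$ respectively; summing along the admissible sequence then produces the subgaussian parameter $V=d(\mc{A})(\gamma_2(\mc{A},\left\|\cdot\right\|)+d_F(\mc{A}))$ and the subexponential parameter $U=d(\mc{A})^2$.

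The main obstacle I anticipate is precisely this third step: a naive increment-by-increment Hanson--Wright union bound accumulates spurious logarithmic factors, so producing the clean mixed exponent with absolute constants requires the full Talagrand-style generic chaining for Bernoulli chaos, carrying admissible sequences that are simultaneously adapted to both the operator and Frobenius norms. Once the chaining is arranged so that the two scales of increments can be summed separately, the symmetry assumption $\mc{A}=-\mc{A}$ lets us freely pass between one- and two-sided suprema, and combining the mean bound from step two with the deviation inequality $\mbb{P}(\sup_{\mb{A}}|Z(\mb{A})|\geq\mbb{E}\sup_{\mb{A}}|Z(\mb{A})|+t)\leq 2\mr{e}^{-c_2\min\{t^2/V^2,t/U\}}$ delivers the stated theorem.
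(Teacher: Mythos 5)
The paper does not actually prove this statement: Theorem~\ref{thm:RademacherChaos} is quoted verbatim from Krahmer, Mendelson, and Rauhut \cite{krahmer_suprema_2014} and is used purely as a black box in the proof of Theorem~\ref{thm:SparseLS}, so there is no internal proof to compare your argument against. Judged on its own terms, your outline follows the older decoupling-plus-chaining route rather than the one used in \cite{krahmer_suprema_2014}, where the authors avoid decoupling, chain directly on the process $\mb{A}\mapsto\left\Vert \mb{A}\mb{\phi}\right\Vert _{2}$, establish moment bounds of the form $\left(\mbb{E}\sup_{\mb{A}\in\mc{A}}\left|Z(\mb{A})\right|^{p}\right)^{1/p}\lesssim E+\sqrt{p}\,V+p\,U$ for the centered process $Z(\mb{A})$ you define, and then convert these moment bounds into the stated mixed tail via Markov's inequality.

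The genuine gap is in your third step and in how you close the argument. The deviation inequality $\mbb{P}\left(\sup_{\mb{A}}\left|Z(\mb{A})\right|\geq\mbb{E}\sup_{\mb{A}}\left|Z(\mb{A})\right|+t\right)\leq 2\mr{e}^{-c_{2}\min\left\{t^{2}/V^{2},\,t/U\right\}}$ that you invoke at the end, with the specific parameters $U=d^{2}(\mc{A})$ and $V=d(\mc{A})\left(\gamma_{2}(\mc{A},\left\Vert \cdot\right\Vert )+d_{F}(\mc{A})\right)$, is not an off-the-shelf tool; it is essentially the content of the theorem. Known concentration results for suprema of order-two Rademacher chaos express the variance and scale proxies in terms of further expected suprema (of quantities like $\left\Vert \mb{A}^{\mr{T}}\mb{A}\mb{\phi}\right\Vert _{2}$), and reducing those to $V$ and $U$ requires another chaining argument; alternatively one must carry out the two-scale subgaussian/subexponential generic chaining with admissible sequences adapted simultaneously to the operator and Frobenius norms. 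You correctly identify that this is exactly where a naive increment-by-increment Hanson--Wright union bound loses logarithmic factors, but you do not resolve it --- the sketch defers precisely the step that makes the theorem nontrivial. As a map of the proof strategy in the literature your outline is faithful; as a proof it is incomplete, and since the paper itself treats the result as an imported citation, the appropriate course here is to cite \cite[Theorem 1.4]{krahmer_suprema_2014} rather than to reprove it.
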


\begin{IEEEproof}[Proof of Theorem \ref{thm:SparseLS}.]
It can be easily verified that
\begin{align*}
\widetilde{\mb{A}}_{\mb{x}}\left[\begin{array}{cccc}
\mb{\phi}_{1}^{\mr{T}} & \mb{\phi}_{2}^{\mr{T}} & \dotsm & \mb{\phi}_{K}^{\mr{T}}\end{array}\right]^{\mr{T}}&=\mb{H}\mb{x}.
\end{align*}
 Therefore, the conditioning of $\mb{H}$ on sparse vectors
can be obtained by means of appropriate tail bounds for $\left\Vert \widetilde{\mb{A}}_{\mb{x}}\mb{\phi}\right\Vert _{2}^{2}$
with $\mb{x}\in\mc{D}_{S,L}$ and $\mb{\phi}$
being a vector of $KL$ i.i.d. Rademacher random variables. Our goal
is to apply Theorem \ref{thm:RademacherChaos} for the set $\mc{A}$
defined by \eqref{eq:Set}, to obtain the desired tail bounds. It
follows from Theorem \ref{thm:RademacherChaos} that for some positive
absolute constants $c_{1}$ and $c_{2}$ we have 
\begin{align}
\sup_{\mb{x}\in\mc{D}_{S,L} }\left\Vert \mb{H}\mb{x}\right\Vert _{2}^{2}& \leq\sup_{\mb{x}\in\mc{D}_{S,L} }\mbb{E}\left\Vert \widetilde{\mb{A}}_{\mb{x}}\mb{\phi}\right\Vert _{2}^{2}+c_{1}E+t\nonumber \\
 & =\sup_{\mb{x}\in\mc{D}_{S,L} }K\left\Vert \mb{A}_{\mb{x}}\right\Vert _{F}^{2}+c_{1}E+t\nonumber \\
 & \leq K\theta_{\max}^{2}+c_{1}E+t,\label{eq:SLS-U}
 \end{align}
 and 
 \begin{align}
\inf_{\mb{x}\in\mc{D}_{S,L}}\left\Vert \mb{H}\mb{x}\right\Vert _{2}^{2} & \geq\inf_{\mb{x}\in\mc{D}_{S,L} }\mbb{E}\left\Vert \widetilde{\mb{A}}_{\mb{x}}\mb{\phi}\right\Vert _{2}^{2}-c_{1}E-t\nonumber \\
 & =\inf_{\mb{x}\in\mc{D}_{S,L} }K\left\Vert \mb{A}_{\mb{x}}\right\Vert _{F}^{2}-c_{1}E-t\nonumber \\
 & \geq K\theta_{\min}^{2}-c_{1}E-t\label{eq:SLS-L}
\end{align}
 with probability at least $1-2\mr{e}^{-c_{2}\min\left\{ \frac{t}{U},\frac{t^{2}}{V^{2}}\right\} }$.
It only remains to properly bound the quantities $U$, $V$, and $E$
and choose a reasonable value for $t$.

First we need to bound $\gamma_{2}\left(\mc{A},\left\Vert \cdot\right\Vert \right)$.
Using the special structure of $\widetilde{\mb{A}}_{\mb{x}}$
we deduce that 
\begin{align*}
\left\Vert \widetilde{\mb{A}}_{\mb{x}}-\widetilde{\mb{A}}_{\mb{x}'}\right\Vert  & =\left\Vert \mb{A}_{\mb{x}}-\mb{A}_{\mb{x}'}\right\Vert \\
 & \leq\left\Vert\mb{G}\right\Vert \left\Vert \mb{x}-\mb{x}'\right\Vert _{\infty}\\
 & =\rho\left\Vert \mb{x}-\mb{x}'\right\Vert _{\infty}.
\end{align*}
 Therefore, $N\left(\mc{A},\left\Vert \cdot\right\Vert ,u\right)\leq N\left(\mc{D}_{S,L},\rho\left\Vert \cdot\right\Vert _{\infty},u\right)$.
Then a simple volumetric argument yields 
\begin{align*}
N\left(\mc{D}_{S,L},\rho\left\Vert \cdot\right\Vert _{\infty},u\right) & \leq\binom{L}{S}\left(1+\frac{2\rho}{u}\right)^{S}\\
 & \leq\left(\frac{Le}{S}\left(1+\frac{2\rho}{u}\right)\right)^{S}.
\end{align*}
 Thus, using \eqref{eq:DudleyBound} and for sufficiently large absolute
constant $\beta_{0}$, we can write 
\begin{align*}
\gamma_{2}\left(\mc{A},\left\Vert \cdot\right\Vert \right) & \leq c_{0}\int_{0}^{\theta_{\max}}\sqrt{\log N\left(\mc{A},\left\Vert \cdot\right\Vert ,u\right)}\mr{d}u\\
 & \leq c_{0}\int_{0}^{\theta_{\max}}\sqrt{\log N\left(\mc{D}_{S,L},\rho\left\Vert \cdot\right\Vert _{\infty},u\right)}\mr{d}u\\
 & \leq c_{0}\int_{0}^{\theta_{\max}}\sqrt{S\log\frac{Le}{S}+S\log\left(1+\frac{2\rho}{u}\right)}\mr{d}u\\
 & \leq c_{0}\int_{0}^{\theta_{\max}}\sqrt{S\log\frac{Le}{S}}+\sqrt{S\log\left(1+\frac{2\rho}{u}\right)}\mr{d}u\\
 & \leq c_{0}\theta_{\max}\sqrt{S}\left(\sqrt{\log\frac{Le}{S}}+2\sqrt{\log\left(1+\frac{2\rho}{\theta_{\max}}\right)}\right)\\
 & \leq\theta_{\max}\sqrt{\beta_{0}S\log L},
\end{align*}
 where the last two inequalities follow from the bounds 
 \begin{align*}
 \int_{0}^{\alpha}\sqrt{\log\left(1+\frac{1}{u}\right)}\mr{d}u & \leq 2\alpha \sqrt{\log\left(1+\frac{1}{\alpha}\right)}
 \end{align*}
 with $\alpha=\frac{\theta_{\max}}{2\rho}$ and
 \begin{align*}
 \rho&=\left\Vert \mb{G}\right\Vert \leq\left\Vert \mb{G}\right\Vert _{F}\leq\theta_{\max}\sqrt{L},
 \end{align*}
 respectively. Then we can write 
\begin{align*}
U & =d^{2}\left(\mc{A}\right)=\theta_{\max}^{2}, & V & =d\left(\mc{A}\right)\left(\gamma_{2}\left(\mc{A},\left\Vert \cdot\right\Vert \right)+d_{F}\left(\mc{A}\right)\right)\\
& & & \leq\theta_{\max}^{2}\left(\sqrt{\beta_{0}S\log L}+\sqrt{K}\right),
\end{align*}
and
\begin{align*}
E & =\gamma_{2}\left(\mc{A},\left\Vert \cdot\right\Vert \right)\left(\gamma_{2}\left(\mc{A},\left\Vert \cdot\right\Vert \right)+d_{F}\left(\mc{A}\right)\right)+d_{F}\left(\mc{A}\right)d\left(\mc{A}\right)\\
 & \leq\theta_{\max}^{2}\left(\beta_{0}S\log L+\sqrt{\beta_{0}KS\log L}+\sqrt{K}\right).
\end{align*}
 Setting $t=\frac{\delta\theta_{\min}^{2}}{2}\left(\sqrt{\beta_{0}S\log L}+\sqrt{K}\right)\sqrt{K}$
we have 
\begin{align}
c_{1}E+t & \leq c_{1}\theta_{\max}^{2}\left(\beta_{0}S\log L+\sqrt{\beta_{0}KS\log L}+\sqrt{K}\right)\nonumber\\
&+\frac{\delta\theta_{\min}^{2}}{2}\left(\sqrt{\beta_{0}S\log L}+\sqrt{K}\right)\sqrt{K}\label{eq:Deviation0}
\end{align}
 Recall from the statement of the theorem that $\mu = \frac{\theta_{\max}^{2}}{\theta_{\min}^{2}}$. We would like to upper bound the right-hand side of \eqref{eq:Deviation0} by $\delta K \theta_{\min}^2$. The desired bound can be interpreted as a quadratic polynomial being nonnegative at $\sqrt{K}$. It suffices that $\sqrt{K}$ is greater than the largest root of the polynomial. Therefore, straightforward algebra shows that if $K$ obeys 
\begin{align*}
K & \geq\left(\left(\sqrt{\beta_{0}}+\frac{2c_{1}}{\delta}\cdot\mu\left(\sqrt{\beta_{0}}+2\right)\right)^{2}+\frac{4c_{1}}{\delta}\mu\beta_{0}\right)S\log L\\
&=O\left(\delta^{-2}\mu^{2}S\log L\right),
\end{align*}
 then the right-hand side of \eqref{eq:Deviation0} is  bounded from above by $\delta K\theta_{\min}^2$ and thus
\begin{align}
c_{1}E+t & \leq\delta K\theta_{\min}^{2}.\label{eq:Deviation}
\end{align}
 Furthermore, we have 
\begin{align*}
\frac{t}{U} & \geq\frac{\delta}{2\mu}K
\end{align*}
 and 
\begin{align*}
\frac{t}{V} & \geq\frac{\delta}{2\mu}\sqrt{K}
\end{align*}
which imply 
\begin{align}
\mr{e}^{-c_{2}\min\left\{ \frac{t^{2}}{V^{2}},\frac{t}{U}\right\} } & \leq \mr{e}^{-c_{2}K\min\left\{ \frac{\delta}{2\mu},\frac{\delta^{2}}{4\mu^2}\right\}}.\label{eq:FailureProb}
\end{align}
Applying \eqref{eq:Deviation} to \eqref{eq:SLS-U} and \eqref{eq:SLS-L}
and bounding the tail probability using \eqref{eq:FailureProb} shows
that  with probability at least $1-2\mr{e}^{-c_{2}K\min\left\{ \delta/\left(2\mu\right),\delta^{2}/\left(2\mu\right)^2\right\}}$ the inequalities 
\begin{alignat*}{2}
\left(1-\delta\right)K\theta_{\min}^{2} & \leq\left\Vert \mb{H}\mb{x}\right\Vert _{2}^{2} & \leq K\left(\theta_{\max}^{2}+\delta\theta_{\min}^{2}\right),
\end{alignat*}
 which are equivalent to \eqref{eq:SLS-bound}, hold for all $\mb{x}\in\mc{D}_{S,L}$.
 \end{IEEEproof}

\bibliographystyle{IEEEtran}
\bibliography{references}

\end{document}